\newif\ifmaintext
\newif\ifsupplemental
\newif\ifbothtexts
\newcommand{\nfrac}[2]{\nicefrac{#1}{#2}}
\theoremstyle{plain}
\newtheorem{theorem}                 {Theorem}
\newtheorem{proposition}             {Proposition}
\newtheorem{lemma}      [proposition]{Lemma}
\theoremstyle{remark}
\newtheorem*{remark}{Remark}
\newcommand{\tuple}[1]{\mathopen{\langle}#1\mathclose{\rangle}} 
\newcommand{\defeq}{\colonequals} 
\newcommand{\setdef}  [2]{\left\{#1 \mid #2\right\}}             
\newcommand{\enset}   [1]{\mathopen{ \{ }#1\mathclose{ \} }} 
\newcommand{\fdec}    [3]{#1: #2 \longrightarrow #3}
\newcommand{\fdef}    [3]{#1:: #2 \longmapsto     #3}
\newcommand{\blank}{{-}}
\newcommand{\ZZ}  {\mathbb{Z}}
\newcommand{\RR}  {\mathbb{R}}
\newcommand{\RRpz}{\mathbb{R}_{\geq 0}} 
\newcommand{\Forall}[1]{\forall {#1},\;} 
\newenvironment{calculation}{\begin{eqnarray*}&&}{\end{eqnarray*}}
\newenvironment{calculationNUM}{\begin{eqnarray}&&}{\end{eqnarray}}
\newcommand{\just}[2]{\\ &#1& \rule{2em}{0pt} \{ \mbox{\rule[-.7em]{0pt}{1.8em} \footnotesize #2 \/} \} \nonumber\\ && }
\newcommand{\ejust}[1]{\\ &#1& \nonumber\\ && }
\def\bimplies{\Leftrightarrow}
\newcommand{\Mcomma}{\text{ ,}}
\newcommand{\Mdot}{\text{ .}}
\newcommand{\Mand}{\quad\quad\text{ and }\quad\quad}
\newcommand{\M}{\mathcal{M}}
\newcommand{\XMO}{\tuple{X, \M, O}}
\newcommand{\XMOp}{\tuple{X, \M, O'}}
\newcommand{\XpMpO}{\tuple{X', \M', O}}
\newcommand{\XMOone}{\tuple{X_1, \M_1, O}}
\newcommand{\XMOtwo}{\tuple{X_2, \M_2, O}}
\newcommand{\XMOi}{\tuple{X_i, \M_i, O}}
\newcommand{\CF} {\mathsf{CF}}
\newcommand{\NCF}{\mathsf{NCF}}
\newcommand{\vect}[1]{\mathbf{#1}}
\newcommand{\av}{\vect{a}}
\newcommand{\bb}{\vect{b}}
\newcommand{\cc}{\vect{c}}
\newcommand{\dd}{\vect{d}}
\newcommand{\yy}{\vect{y}}
\newcommand{\ii}{\vect{i}}
\newcommand{\oo}{\vect{o}}
\newcommand{\qq}{\vect{q}}
\newcommand{\sv}{\vect{s}}
\newcommand{\vone} {\vect{1}}
\newcommand{\vzero}{\vect{0}}
\newcommand{\IM}{\vect{M}}
\newcommand{\ve}{\vect{v}^{e}}
\newcommand{\veNC}{\vect{v}^{e^{NC}}}
\newcommand{\veSC}{\vect{v}^{e^{SC}}}
\newcommand{\veA}[1]{\vect{v}^{#1}}
\newcommand{\MQ}{\vect{Q}}
\newcommand{\MT}{\vect{T}}
\newcommand{\MZ}{\vect{Z}}
\newcommand{\MIn}{\vect{I}_n}
\newcommand{\diag}{\mathsf{diag}}
\newcommand*\Distr{\mathcal{D}}
\newcommand{\weight}[1]{w(#1)}
\newcommand{\weightz}{w}
\newcommand{\textLP}{\texttt}
\newcommand{\transpose}[1]{#1^{\intercal}}
\newcommand{\BellState}{\ket{\Phi^+}}
\newcommand{\BellStateExpand}{\frac{\ket{0 0} + \ket{1 1}}{\sqrt{2}}}
\newcommand{\BellStateExpandLine}{\frac{1}{\sqrt{2}}(\ket{0 0} + \ket{1 1})}
\newcommand{\GHZStateN}[1]{\ket{\psi_{\mathsf{GHZ}(#1)}}}
\newcommand{\GHZStateNN}{\ket{\psi_{\mathsf{GHZ}(n)}}}
\newcommand{\GHZStateNExpand}{\frac{\ket{0}^{\otimes n}   +  \ket{1}^{\otimes n}}{\sqrt{2}}}
\newcommand{\ket}[1]{|#1\rangle}
\newcommand{\ltMBQC}{$l2$-MBQC}
\newcommand{\ds}[1]{\llbracket #1 \rrbracket}
\newcommand{\Oset}[1]{\mathcal{O}_{#1}}
\newcommand{\Lset}[1]{\mathcal{L}_{#1}}
\newcommand{\Rset}[1]{\mathcal{R}_{#1}}
\newcommand{\choice}{\mathbin{\&}}
\newcommand{\inv}{^{\raisebox{.2ex}{$\scriptscriptstyle-1$}}} 
\newcommand{\prlsection}[1]{\paragraph*{#1.---}\hspace{-1em}}
\def\emph{\textbf}
\def\stress{\textit}
\newcommand\pgfmathsinandcos[3]{%
  \pgfmathsetmacro#1{sin(#3)}%
  \pgfmathsetmacro#2{cos(#3)}%
}
\newcommand\LongitudePlane[3][current plane]{%
  \pgfmathsinandcos\sinEl\cosEl{#2} 
  \pgfmathsinandcos\sint\cost{#3} 
  \tikzset{#1/.estyle={cm={\cost,\sint*\sinEl,0,\cosEl,(0,0)}}}
}
\newcommand\LatitudePlane[3][current plane]{%
  \pgfmathsinandcos\sinEl\cosEl{#2} 
  \pgfmathsinandcos\sint\cost{#3} 
  \pgfmathsetmacro\yshift{\cosEl*\sint}
  \tikzset{#1/.estyle={cm={\cost,0,0,\cost*\sinEl,(0,\yshift)}}} %
}
\newcommand\DrawLatitudeCircle[2][1]{
  \LatitudePlane{\angEl}{#2}
  \tikzset{current plane/.prefix style={scale=#1}}
  \pgfmathsetmacro\sinVis{sin(#2)/cos(#2)*sin(\angEl)/cos(\angEl)}
  \pgfmathsetmacro\angVis{asin(min(1,max(\sinVis,-1)))}
  \draw[current plane] (\angVis:1) arc (\angVis:-\angVis-180:1);
  \draw[current plane,dashed] (180-\angVis:1) arc (180-\angVis:\angVis:1);
}
\tikzset{%
  >=latex, 
  inner sep=0pt,%
  outer sep=2pt,%
  mark coordinate/.style={inner sep=0pt,outer sep=0pt,minimum size=3pt,
    fill=black,circle}%
}
\newcommand{\EquatorialBloch}{\begin{tikzpicture}[scale = 0.85]



\def\R{3} 
\def\angEl{25} 
\def\angAz{-100} 
\def\angPhiOne{-55} 
\def\angPhiTwo{-20} 
\def\angBeta{0} 


\pgfmathsetmacro\H{\R*cos(\angEl)} 
\LongitudePlane[xzplane]{\angEl}{\angAz}
\LongitudePlane[pzplane]{\angEl}{\angPhiOne}
\LongitudePlane[qzplane]{\angEl}{\angPhiTwo}
\LatitudePlane[equator]{\angEl}{0}


\fill[ball color=white] (0,0) circle (\R); 
\draw (0,0) circle (\R);


\coordinate (O) at (0,0);
\coordinate (N) at (0,\H);
\coordinate (S) at (0,-\H);
\path[xzplane] (\R,0) coordinate (XE);
\path[pzplane] (\angBeta:\R) coordinate (P);
\path[pzplane] (\R,0) coordinate (PE);
\path[qzplane] (\angBeta:\R) coordinate (Q);
\path[qzplane] (\R,0) coordinate (QE);


\DrawLatitudeCircle[\R]{0}; 



\node[above=8pt] at (N) {$|0\rangle$};
\node[below=8pt] at (S) {$|1\rangle$};
\draw[dashed] (S) -- (N);
\draw[dashed,->] (O) -- (PE);
\draw[dashed,->] (O) -- (XE);
\draw[dashed,->] (O) -- (QE);
\path[pzplane] (0.5*\angBeta:\R) ;
\path[qzplane] (0.5*\angBeta:\R) ;
\draw[equator,->,thin] (\angAz:0.5*\R) to[bend right=30]
    node[pos=0.4,above] {$\phi_1$} (\angPhiOne:0.5*\R);
\draw[equator,->,thin] (\angAz:0.7*\R) to[bend right=40]
    node[pos=0.6,above] {$\phi_2$} (\angPhiTwo:0.7*\R);
\draw[qzplane,->,thin] (90:0.3*\R) to[bend left=45]
    node[midway,right] {$\theta = \frac{\pi}{2}$} (\angBeta:0.3*\R);

\draw[equator] (-180:\R) arc (-180:0:\R) ;
\draw[equator,dashed] (0:\R) arc (0:180:\R) ;

\path[xzplane] (0:\R) node[below left] {$\phi = 0$};
\path[xzplane] (\angBeta:\R);

  \fill[black]   (O) circle(2pt) ;
  \fill[black]   (N) circle(2pt) ;
  \fill[black]   (S) circle(2pt) ;

\end{tikzpicture}}
\newcommand{\ABTableNUM}[4]{
\begin{array}{cc|cccc}
  \text{A} & \text{B} & 0\,0 & 0\,1 & 1\,0 & 1\,1 \\
  \hline
  a_1 & b_1 & #1 \\ 
  a_1 & b_2 & #2 \\ 
  a_2 & b_1 & #3 \\ 
  a_2 & b_2 & #4 \\ 
\end{array}
}
\newcommand{\ABTableBellmodel}{
\ABTableNUM
{\nfrac{1}{2} &  0  &  0  & \nfrac{1}{2}}
{\nfrac{3}{8} & \nfrac{1}{8} & \nfrac{1}{8} & \nfrac{3}{8}}
{\nfrac{3}{8} & \nfrac{1}{8} & \nfrac{1}{8} & \nfrac{3}{8}}
{\nfrac{1}{8} & \nfrac{3}{8} & \nfrac{3}{8} & \nfrac{1}{8}}}
\newcommand{\ABTablePRbox}{
\ABTableNUM
{\nfrac{1}{2} &      0       &       0       & \nfrac{1}{2}}
{\nfrac{1}{2} &      0       &       0       & \nfrac{1}{2}}
{\nfrac{1}{2} &      0       &       0       & \nfrac{1}{2}}
{     0       & \nfrac{1}{2} &  \nfrac{1}{2} &      0      }}
\newcommand{\ABCTableNUM}[8]{
\begin{array}{ccc|cccccccc}
  \text{A} & \text{B} & \text{C} & 0\,0\,0 & 0\,0\,1 & 0\,1\,0 & 0\,1\,1 & 1\,0\,0 & 1\,0\,1 & 1\,1\,0 & 1\,1\,1\\
  \hline
  a_1 & b_1 & c_1 & #1 \\ 
  a_1 & b_1 & c_2 & #2 \\ 
  a_1 & b_2 & c_1 & #3 \\ 
  a_1 & b_2 & c_2 & #4 \\ 
  a_2 & b_1 & c_1 & #5 \\ 
  a_2 & b_1 & c_2 & #6 \\ 
  a_2 & b_2 & c_1 & #7 \\ 
  a_2 & b_2 & c_2 & #8 \\ 
\end{array}
}
\newcommand{\ABCTableGHZ}{
\ABCTableNUM
{     0       & \nfrac{1}{4} & \nfrac{1}{4}  &      0       & \nfrac{1}{4} &      0       &      0       & \nfrac{1}{4}}
{\nfrac{1}{8} & \nfrac{1}{8} & \nfrac{1}{8}  & \nfrac{1}{8} & \nfrac{1}{8} & \nfrac{1}{8} & \nfrac{1}{8} & \nfrac{1}{8}}
{\nfrac{1}{8} & \nfrac{1}{8} & \nfrac{1}{8}  & \nfrac{1}{8} & \nfrac{1}{8} & \nfrac{1}{8} & \nfrac{1}{8} & \nfrac{1}{8}}
{\nfrac{1}{4} &      0       &       0       & \nfrac{1}{4} &      0       & \nfrac{1}{4} & \nfrac{1}{4} &      0      }
{\nfrac{1}{8} & \nfrac{1}{8} & \nfrac{1}{8}  & \nfrac{1}{8} & \nfrac{1}{8} & \nfrac{1}{8} & \nfrac{1}{8} & \nfrac{1}{8}}
{\nfrac{1}{4} &      0       &       0       & \nfrac{1}{4} &      0       & \nfrac{1}{4} & \nfrac{1}{4} &      0      }
{\nfrac{1}{4} &      0       &       0       & \nfrac{1}{4} &      0       & \nfrac{1}{4} & \nfrac{1}{4} &      0      }
{\nfrac{1}{8} & \nfrac{1}{8} & \nfrac{1}{8}  & \nfrac{1}{8} & \nfrac{1}{8} & \nfrac{1}{8} & \nfrac{1}{8} & \nfrac{1}{8}}}
\newcommand{\ABCTableLocalGHZ}{
\ABCTableNUM
{\nfrac{1}{16}& \nfrac{3}{16}& \nfrac{3}{16}& \nfrac{1}{16}& \nfrac{3}{16}& \nfrac{1}{16}& \nfrac{1}{16}& \nfrac{3}{16}}
{\nfrac{1}{8} & \nfrac{1}{8} & \nfrac{1}{8} & \nfrac{1}{8} & \nfrac{1}{8} & \nfrac{1}{8} & \nfrac{1}{8} & \nfrac{1}{8} }
{\nfrac{1}{8} & \nfrac{1}{8} & \nfrac{1}{8} & \nfrac{1}{8} & \nfrac{1}{8} & \nfrac{1}{8} & \nfrac{1}{8} & \nfrac{1}{8} }
{\nfrac{3}{16}& \nfrac{1}{16}& \nfrac{1}{16}& \nfrac{3}{16}& \nfrac{1}{16}& \nfrac{3}{16}& \nfrac{3}{16}& \nfrac{1}{16}}
{\nfrac{1}{8} & \nfrac{1}{8} & \nfrac{1}{8} & \nfrac{1}{8} & \nfrac{1}{8} & \nfrac{1}{8} & \nfrac{1}{8} & \nfrac{1}{8} }
{\nfrac{3}{16}& \nfrac{1}{16}& \nfrac{1}{16}& \nfrac{3}{16}& \nfrac{1}{16}& \nfrac{3}{16}& \nfrac{3}{16}& \nfrac{1}{16}}
{\nfrac{3}{16}& \nfrac{1}{16}& \nfrac{1}{16}& \nfrac{3}{16}& \nfrac{1}{16}& \nfrac{3}{16}& \nfrac{3}{16}& \nfrac{1}{16}}
{\nfrac{1}{8} & \nfrac{1}{8} & \nfrac{1}{8} & \nfrac{1}{8} & \nfrac{1}{8} & \nfrac{1}{8} & \nfrac{1}{8} & \nfrac{1}{8} }}
\newcommand{\ABCTableOtherLocal}{
\ABCTableNUM
{\nfrac{1}{16}& \nfrac{5}{16}& \nfrac{5}{16}& \nfrac{1}{16}& \nfrac{1}{16}& \nfrac{1}{16}& \nfrac{1}{16}& \nfrac{1}{16}}
{\nfrac{1}{4} & \nfrac{1}{8} & \nfrac{1}{8} & \nfrac{1}{4} & \nfrac{1}{8} &      0       &      0       & \nfrac{1}{8} }
{\nfrac{1}{4} & \nfrac{1}{8} & \nfrac{1}{8} & \nfrac{1}{4} & \nfrac{1}{8} &      0       &      0       & \nfrac{1}{8} }
{\nfrac{5}{16}& \nfrac{1}{16}& \nfrac{1}{16}& \nfrac{5}{16}& \nfrac{1}{16}& \nfrac{1}{16}& \nfrac{1}{16}& \nfrac{1}{16}}
{\nfrac{1}{8} & \nfrac{1}{4} & \nfrac{1}{4} & \nfrac{1}{8} &      0       & \nfrac{1}{8} & \nfrac{1}{8} &      0       }
{\nfrac{5}{16}& \nfrac{1}{16}& \nfrac{1}{16}& \nfrac{5}{16}& \nfrac{1}{16}& \nfrac{1}{16}& \nfrac{1}{16}& \nfrac{1}{16}}
{\nfrac{5}{16}& \nfrac{1}{16}& \nfrac{1}{16}& \nfrac{5}{16}& \nfrac{1}{16}& \nfrac{1}{16}& \nfrac{1}{16}& \nfrac{1}{16}}
{\nfrac{1}{4} & \nfrac{1}{8} & \nfrac{1}{8} & \nfrac{1}{4} & \nfrac{1}{8} &      0       &      0       & \nfrac{1}{8} }}
\newcommand{\ABCTableOtherStronglyNL}{
\ABCTableNUM
{     0       & \nfrac{1}{8} & \nfrac{1}{8}  &      0       & \nfrac{3}{8} &      0       &      0       & \nfrac{3}{8}}
{     0       & \nfrac{1}{8} & \nfrac{1}{8}  &      0       & \nfrac{1}{8} & \nfrac{1}{4} & \nfrac{1}{4} & \nfrac{1}{8}}
{     0       & \nfrac{1}{8} & \nfrac{1}{8}  &      0       & \nfrac{1}{8} & \nfrac{1}{4} & \nfrac{1}{4} & \nfrac{1}{8}}
{\nfrac{1}{8} &      0       &      0        & \nfrac{1}{8} &      0       & \nfrac{3}{8} & \nfrac{3}{8} &      0      }
{\nfrac{1}{8} &      0       &      0        & \nfrac{1}{8} & \nfrac{1}{4} & \nfrac{1}{8} & \nfrac{1}{8} & \nfrac{1}{4}}
{\nfrac{1}{8} &      0       &      0        & \nfrac{1}{8} &      0       & \nfrac{3}{8} & \nfrac{3}{8} &      0      }
{\nfrac{1}{8} &      0       &      0        & \nfrac{1}{8} &      0       & \nfrac{3}{8} & \nfrac{3}{8} &      0      }
{     0       & \nfrac{1}{8} & \nfrac{1}{8}  &      0       & \nfrac{1}{8} & \nfrac{1}{4} & \nfrac{1}{4} & \nfrac{1}{8}}}
\newcommand{\ie}{i.e.~}
\newcommand{\eg}{e.g.~}
\newcommand{\thmCFmonotoneSTATEMENT}{The contextual fraction is
invariant under relabellings,
and non-increasing under translation of measurements and coarse-graining of outcomes.
For the combining operations, it satisfies the following properties:
\begin{itemize}
\item
$\CF(\lambda e + (1-\lambda) e') \leq \lambda\CF(e) + (1-\lambda) \CF(e)$
\item
$\CF(e \choice e') = \max\enset{\CF(e),\CF(e')}$
\item
$\CF(e \otimes e') = \CF(e) + \CF(e') - \CF(e)\CF(e')$
\end{itemize}}
\newif\iffullform
\newcommand{\thmCFBellineqsSTATEMENT}{
Let $e$ be an empirical model.
\begin{enumerate*}[label=(\roman*)]
\item\label{item:atmost} The normalised violation by $e$ of any Bell inequality is at most $\CF(e)$;
\item\label{item:ineq} if $\CF(e) > 0$, this bound is attained, \ie there exists a Bell inequality whose normalised violation by $e$ is $\CF(e)$;
\item\label{item:tight} moreover, for any decomposition of the form
\iffullform
$e = \NCF(e) e^{NC} + \CF(e)e^{SC}$,
\else
 \eqref{eq:decomp},
\fi
this Bell inequality is tight at the non-contextual model $e^{NC}$ (provided $\NCF(e)>0$)
and maximally violated at the strongly contextual model $e^{SC}$.
\end{enumerate*}
}
\newcommand{\thmCFMBQCSTATEMENT}{
Let $\fdec{f}{2^m}{2^l}$ be a Boolean function and consider an \ltMBQC\ that uses the empirical model $e$
to compute $f$ with average success probability $\bar{p}_S$ over all $2^m$ possible inputs, and corresponding average failure probability $\bar{p}_F = 1 - \bar{p}_S$. Then, $\bar{p}_F \; \geq \; \NCF(e)\tilde{\nu}(f)$.}
\newcommand{\thmCFmonotoneMAIN}{\ifbothtexts
\begin{restatable}{theorem}{thmCFmonotone}\label{thm:CFmonotone}
\thmCFmonotoneSTATEMENT
\end{restatable}
\else
\begin{theorem}\label{thm:CFmonotone}
\thmCFmonotoneSTATEMENT
\end{theorem}
\fi
}
\newcommand{\thmCFmonotoneSUPP}{\ifbothtexts
\thmCFmonotone*
\else
\begin{theorem}\label{thm:CFmonotone}
\thmCFmonotoneSTATEMENT
\end{theorem}
\fi
}
\newcommand{\thmCFBellineqsMAIN}{\ifbothtexts
\begin{restatable}{theorem}{thmCFBellineqs}\label{thm:CFBellineqs}
\thmCFBellineqsSTATEMENT
\end{restatable}
\else
\begin{theorem}\label{thm:CFBellineqs}
\thmCFBellineqsSTATEMENT
\end{theorem}
\fi
}
\newcommand{\thmCFBellineqsSUPP}{\ifbothtexts
\thmCFBellineqs*
\else
\begin{theorem}\label{thm:CFBellineqs}
\thmCFBellineqsSTATEMENT
\end{theorem}
\fi
}
\newcommand{\thmCFMBQCMAIN}{\ifbothtexts
\begin{restatable}{theorem}{thmCFMBQC}\label{thm:CFMBQC}
\thmCFMBQCSTATEMENT
\end{restatable}
\else
\begin{theorem}\label{thm:CFMBQC}
\thmCFMBQCSTATEMENT
\end{theorem}
\fi
}
\newcommand{\thmCFMBQCSUPP}{\ifbothtexts
\thmCFMBQC*
\else
\begin{theorem}\label{thm:CFMBQC}
\thmCFMBQCSTATEMENT
\end{theorem}
\fi
}
\begin{document}



\let\reusablemaketitle\maketitle
\let\reusableauthor\author
\let\reusableaffiliation\affiliation

\newcommand{\affilone}{Department of Computer Science, University of Oxford \\ Wolfson Building, Parks Rd, Oxford OX1 3QD, United Kingdom}
\newcommand{\affiltwo}{School of Informatics, University of Edinburgh \\ Informatics Forum, 10 Crichton Street, Edinburgh EH8 9AB, United Kingdom}

\ifmaintext
\title{The Contextual Fraction as a Measure of Contextuality}
\author{Samson Abramsky}%
\affiliation{\affilone}%
\author{Rui Soares Barbosa}%
\affiliation{\affilone}%
\author{Shane Mansfield}%
\affiliation{\affiltwo}%
\date{\today}

\begin{abstract}
We consider the contextual fraction as a quantitative measure of contextuality of empirical models, \ie tables of probabilities of measurement outcomes in an experimental scenario.
It provides a general way to compare the degree of contextuality across measurement scenarios; it bears a precise relationship to violations of Bell inequalities; its value, and a witnessing inequality, can be computed using linear programming; it is monotone with respect to  the ``free'' operations of a resource theory for contextuality; and it measures quantifiable advantages in informatic tasks, such as games and a form of measurement based quantum computing.
\end{abstract}

\fi
\pacs{03.65.Ud, 03.67.Mn}
\ifmaintext
\maketitle


\prlsection{Introduction}%
Recent results have established the r\^ole of contextuality as a resource for increasing the computational power of specific models of computation \cite{AndersBrowne,RaussendorfSC}, including enabling universal quantum computation \cite{Howard2014}.
From this perspective, it is particularly relevant to look for appropriate measures of contextuality, and indeed to pose the question of what constitutes a good measure.

In this Letter, we propose a measure of contextuality---the contextual fraction---which provides a quantitative grading between non-contextuality, at one extreme, and maximal contextuality, at the other. 
A maximally contextual empirical model is one that admits no proper decomposition into a convex combination of a non-contextual model and another model.
In this sense, it is meaningful to consider both the non-contextual and contextual fractions of any no-signalling empirical model.

These definitions are made in the general setting of the approach to contextuality introduced in \cite{AbramskyBrandenburger},
in which nonlocality is seen as a special case of contextuality.

We show that the contextual fraction has a number of desirable properties:
\begin{enumerate*}[label=(\roman*)]
\item it is fully general in the sense that it applies in any measurement scenario;
\item it is bounded and normalised, taking values in the interval $[0,1]$, with $0$ indicating non-contextuality and $1$ indicating strong contextuality, so it may be used to sensibly compare the degree of contextuality of empirical models not just in a given measurement scenario but also across scenarios;
\item it has a precise relationship with violations of  Bell inequalities, being the maximum normalised violation attained by the  empirical model for any  Bell inequality on the corresponding measurement scenario;
\item\label{lp} both the contextual fraction and a witnessing  Bell inequality are computable using linear programming---these were implemented and used for computational exploration of some quantum examples;
\item\label{rt} it is monotone with respect to a range of operations on empirical models that intuitively do not generate contextuality, and thus constitute natural ``free'' operations in a resource theory of contextuality, analogous to the resource theory of entanglement under LOCC operations \cite{HorodeckiEntanglement}, and subsuming existing resource theories for nonlocality \cite{BarrettEtAl:NonlocalCorrelationsResource,GallegoEtAl:OperationalFrameworkNonlocality,DeVincenteNL};
\item\label{adv} finally, it is related to a quantifiable increase of computational power in a certain form of measurement-based quantum computation, sharpening the results of \cite{RaussendorfSC}, and similarly to advantage in games.
\end{enumerate*}

We leave for future work an analysis of the relationship between the contextual fraction and other possible measures \footnote{These include a negative probability measure introduced in \cite{AbramskyBrandenburger}, contextuality-by-default measures \cite{dzhafarov2015contextuality}, and various other measures considered for nonlocality in specific measurement scenarios, \eg noise-based \cite{kaszlikowski2000violations,acin2002quantum,junge2010operator} and inefficiency-based \cite{massar2002bell,laplante2012classical,degorre2009communication} measures, known to relate to communication complexity \cite{degorre2009communication,laplante2016robust}.}, and further development of \ref{adv}.

\prlsection{General framework for contextuality}%
We briefly summarise the framework introduced in \cite{AbramskyBrandenburger}.
The main objects of study are \emph{empirical models}: tables of data, specifying probability distributions over the joint outcomes of sets of compatible measurements.

A \emph{measurement scenario} is an abstract description of a particular experimental setup.
It consists of a triple $\XMO$ where:
$X$ is a finite set of measurements;
$O$ is a finite set of outcome values for each measurement;
and $\M$ is a set of subsets of $X$.
Each $C \in \M$ is called a  \emph{measurement context}, and represents a set of measurements that can be performed together.


Examples of measurement scenarios include multipartite Bell-type scenarios familiar from discussions of nonlocality, Kochen--Specker configurations, measurement scenarios associated with qudit stabiliser quantum mechanics, and more.
For example, the well-known $(2,2,2)$ Bell scenario,
where two experimenters, Alice and Bob,
can each choose between performing one of two different measurements, say $a_1$ or $a_2$ for Alice and $b_1$ or $b_2$ for Bob,
obtaining one of two possible outcomes, is represented as follows:
\begin{align*}
X &= \enset{a_1,a_2,b_1,b_2} \qquad\qquad O = \enset{0,1} \\
\M &= \enset{ \enset{a_1,b_1} , \enset{a_1,b_2} , \enset{a_2,b_1}, \enset{a_2, b_2} }
\Mdot
\end{align*}

Given this description of the experimental setup, then either performing repeated runs of such experiments with varying choices of measurement context and recording the frequencies of the various outcome events, or calculating theoretical predictions for the probabilities of these outcomes, results in a
probability table as in Table~\ref{tab:BellPRmodel}.
\begin{table}
\caption{\label{tab:BellPRmodel} Two empirical models on the $(2,2,2)$ Bell scenario:
the well-known CHSH model \cite{CHSH}, obtained from local projective measurements equatorial at angles $0$ (for $a_1$, $b_1$) and $\nfrac{\pi}{3}$ (for $a_2$, $b_2$)  on the maximally entangled two-qubit Bell-state $\BellState = \BellStateExpandLine$; and the Popescu--Rohrlich box.
}
$\ABTableBellmodel \quad\quad \ABTablePRbox$
\end{table}

Such data is formalised as an \emph{empirical model} for the given measurement scenario $\XMO$.
For each measurement context $C$, there is a probability distribution $e_C$ on the joint outcomes of performing all the measurements in $C$; that is, on the set 
$O^C$ of functions assigning an outcome in $O$ to each measurement in $C$.

We require that the marginals of these distributions agree whenever contexts overlap, \ie
\[\Forall{C, C' \in \M} e_C|_{C \cap C'} = e_{C'}|_{C \cap C'} \Mcomma\]
where the notation $e_C|_{U}$ with $U \subseteq C$ stands for marginalisation of probability distributions (to `forget' the outcomes of some measurements): for $t \in O^U$, $e_C|_{U}(t) \defeq \sum_{s \in O^C, s|_U = t}e_C(s)$.
The requirement of \emph{compatibility of marginals} is a generalisation of the usual \emph{no-signalling} condition, and is satisfied in particular by all empirical models arising from quantum predictions \cite{AbramskyBrandenburger}.


An empirical model is said to be \emph{contextual} if this family of distributions
cannot itself be obtained as the marginals of a single probability distribution on global assignments of outcomes to all measurements,
\ie a distribution $d$ on $O^X$ (where $O^X$ acts as a canonical set of deterministic hidden variables) such that $\Forall{C \in \M} d|_C = e_C$.
Equivalently \cite{AbramskyBrandenburger}, contextual empirical models are those which have no realisation by factorisable hidden variable models; thus for Bell-type measurement scenarios contextuality specialises to the usual notion of \emph{nonlocality}.

In certain cases, one can witness contextuality from merely the possibilistic, rather than probabilistic, information contained in an empirical model---\ie which events are {possible} (with non-zero probability) and which are impossible (with zero probability).
A yet stronger form of contextuality occurs when no global assignment of outcomes is even consistent with the possible events:
an empirical model $e$ is said to be \emph{strongly contextual} if there is no global assignment $g \in O^X$ such that  $\Forall{C \in \M} e_C(g|_C) > 0$.
An example is the Popescu--Rohrlich box (Table~\ref{tab:BellPRmodel}).
This is the highest level in the qualitative hierarchy of strengths of contextuality introduced in \cite{AbramskyBrandenburger}.

\prlsection{The contextual fraction}%
Given two empirical models $e$ and $e'$ on the same measurement scenario
and $\lambda \in [0,1]$, we define the empirical model $\lambda e + (1-\lambda) e'$ by taking the convex sum of probability distributions at each context.
Compatibility is preserved by this convex sum, hence it yields a well-defined empirical model.
%

A natural question to ask is: what fraction of a given empirical model $e$ admits a non-contextual explanation?
This approach enables a refinement of the binary notion of contextuality \textit{vs} non-contextuality into a quantitative grading.
Instead of asking for a probability distribution on global assignments
that marginalises to the empirical distributions at each context,
we ask only for a subprobability distribution \footnote{A subprobability distribution on a set $S$ is a map $\fdec{b}{S}{\RRpz}$ with finite support and $\weight{b} \leq 1$, where $\weight{b} \defeq \sum_{s \in S}b(s)$ is called its weight. The set of subprobability distributions on $S$ is ordered pointwise: $b'$ is a subdistribution of $b$ (written $b' \leq b$) whenever $\Forall{s \in S}b'(s) \leq b(s)$.} $b$ on global assignments $O^X$
that marginalises at each context to a subdistribution of the empirical data, thus explaining a fraction of the events,
\ie $\Forall{C \in \M} b|_C \leq e_C$.
Equivalently, we ask for a convex decomposition
\begin{equation}\label{eq:convexdec}
  e = \lambda e^{NC} + (1-\lambda) e'
\end{equation}
where $e^{NC}$ is a non-contextual model and $e'$ is another (no-signalling) empirical model.
The maximum 
weight of such a global subprobability distribution, or the maximum possible value of $\lambda$ in such a decomposition 
is called the \emph{non-contextual fraction} of $e$,
and generalises the  \emph{local fraction} previously introduced for models on Bell-type scenarios
\cite{ElitzurPopescuRohrlich1992:QuantumNonlocalityForEachPairInAnEnsemble} \footnote{See also \cite{BarretKentPironio2006:MaximallyNonlocalAndMonogamousQuantumCorrelations, AolitaEtAl2012:FullyNonlocalQuantumCorrelations}
where the term \stress{local fraction} is actually used.}.
We denote it by $\NCF(e)$, and the contextual fraction by $\CF(e) \defeq 1 - \NCF(e)$.

The notion of contextual fraction in general scenarios was introduced in \cite{AbramskyBrandenburger},
where it was proved that a model is strongly contextual if and only if its contextual fraction is $1$.
In fact, in any convex decomposition of the form \eqref{eq:convexdec} giving maximal weight to the non-contextual model, the other model will necessarily be strongly contextual.
This means that any empirical model $e$ admits a convex decomposition
\begin{equation}\label{eq:decomp}
e = \NCF(e) \, e^{NC} + \CF(e) \, e^{SC}
\end{equation}
into a non-contextual and a strongly contextual model.
Note that $e^{NC}$ and $e^{SC}$ are not necessarily unique.

\prlsection{Computing the contextual fraction via LP}%
The task of finding a consistent probability subdistribution with maximum weight for a given empirical model can be formulated as a linear programming problem.
This is a relaxation of the test for contextuality
by solving a system of linear equations over the nonnegative reals from \cite{AbramskyBrandenburger}.

Fix a measurement scenario $\XMO$.
Let $n \defeq |O^X|$ be the number of global assignments $g$,
and $m \defeq \sum_{C \in \M} |O^C| = |\setdef{\tuple{C,s}}{C\in\M, s \in O^C}|$ be the number of local assignments ranging over contexts.
The \emph{incidence matrix} \cite{AbramskyBrandenburger} $\IM$ is an $m \times n$ $(0,1)$-matrix that records the restriction relation between global and local assignments:
\[
\IM[\tuple{C,s},g] \defeq
\begin{cases}
 1 \;\text{ if $g|_C = s$;}
\\ 
 0 \;\text{ otherwise.}
\end{cases}
\]

An empirical model $e$ can be represented as a vector $\ve \in \RR^m$, with the component $\ve[\tuple{C,s}]$ recording the probability given by the model to the assignment $s$ at the measurement context $C$, $e_C(s)$. 
This vector is a flattened version of the table used to represent the empirical model (e.g.~Table~\ref{tab:BellPRmodel}).
The columns of the incidence matrix, $\mathbf{M}[-,g]$, are the vectors corresponding to the (non-contextual) deterministic models obtained from global assignments $g \in O^X$.
Recall that every non-contextual model can be written as a mixture of these.
A probability distribution on global assignments can be represented as a vector $\dd \in \RR^n$ with non-negative components summing to $1$, and then the corresponding non-contextual model is represented by the vector $\IM \, \dd$.
So a model $e$ is non-contextual if and only if there exists a $\dd \in \RR^n$ such that:
\[\IM \, \dd \,=\, \ve \Mand \dd \geq \mathbf{0} \Mdot\]
Note that the first condition implies that $\dd$ is normalised.


A global subprobability distribution is also represented by a vector $\bb \in \RR^n$ with non-negative components, its weight being given by the dot product $\vone \cdot \bb$,
where $\vone \in \RR^n$ is the vector whose $n$ components are each $1$. The following LP thus calculates the non-contextual fraction of an empirical model $e$,
with $\NCF(e) = \vone\cdot \bb^*$ where $\bb^*$ is an optimal solution:
\begin{equation}\label{LP:quantifying}
\begin{alignedat}{3}
&\textLP{Find }       \;\;&&  \bb \in \RR^n
\\
&\textLP{maximising }   \;\;&& \vone \cdot \bb
\\
&\textLP{subject to } \;\;&& \IM \, \bb \,\leq\, \ve
\\
&\textLP{and }        \;\;&& \bb \,\geq\, \vzero
& \Mdot
\end{alignedat}
\end{equation}

\prlsection{Violations of generalised Bell inequalities}%
We now provide further justification for viewing the contextual fraction as a measure of contextuality by relating it to violations of contextuality-witnessing inequalities.

An \emph{inequality} for a scenario $\tuple{X,\M,O}$ is given by
a vector $\av \in \RR^m$ of real coefficients indexed by local assignments
$\tuple{C,s}$, 
and a bound $R$.
For a model $e$, the inequality reads
$\av \cdot \ve  \, \leq \, R$,
where
\[\av \cdot \ve \; = \; \sum_{C \in \M, s \in O^C} \, \av[\tuple{C,s}] \, e_C(s) \Mdot \]
Without loss of generality, we can take $R$ to be non-negative (in fact, even $R = 0$)
as any inequality is equivalent to one of this form.
We call it a \emph{Bell inequality} if it is satisfied by every non-contextual model.
This generalises the usual notion of Bell inequality, which is defined for Bell-type scenarios for nonlocality, to apply to any contextuality scenario.
If, moreover, it is saturated by some non-contextual model, the Bell inequality is said to be \emph{tight}.
A Bell inequality establishes a bound for the value of $\av \cdot \ve$ amongst non-contextual models $e$.
For more general models, this quantity is limited only
by the algebraic bound 
\[\|\av\| \defeq \sum_{C \in \M}\max\setdef{\av[\tuple{C,s}]}{s\in O^C} \Mdot\]
Note that we will consider only inequalities satisfying $R < \|\av\|$, which excludes inequalities trivially satisfied by all models, and avoids cluttering the presentation with special caveats about division by 0.

The \emph{violation} of a Bell inequality $\tuple{\av,R}$ by a model $e$ is 
$\max\enset{0,\av \cdot \ve - R}$.
However, it is useful to normalise this value by the maximum possible violation
in order to give a better idea of the \stress{extent} to which the model violates the inequality.
The \emph{normalised violation}
of the Bell inequality by the model $e$ is 
\[\frac{\max\enset{0,\av \cdot \ve - R}}{\|\av\| - R} \Mdot\]

\fullformfalse
\thmCFBellineqsMAIN

The proof of this result is based on the Strong Duality theorem of linear programming \cite{DantzigThapa:LinearProgramming2}.
It provides an LP method of calculating a witnessing Bell inequality for any empirical model $e$.
The symmetric dual of \eqref{LP:quantifying} is the following LP:
\begin{equation}\label{LP:dualquantifying}
\begin{alignedat}{3}
&\textLP{Find }       \;\;&&  \yy \in \RR^m
\\
&\textLP{minimising }   \;\;&& \yy \cdot \ve
\\
&\textLP{subject to } \;\;&& \IM^T \, \yy \,\geq\, \mathbf{1}
\\
&\textLP{and }        \;\;&& \yy \,\geq\, \mathbf{0}
& \Mdot
\end{alignedat}
\end{equation}
The  Strong Duality theorem says that, if $\bb^*$ is a solution for \eqref{LP:quantifying}, then there is a solution $\yy^*$ for \eqref{LP:dualquantifying} satisfying
\begin{equation}
\label{sdeq}
 \vone \cdot \bb^* \; = \; \yy^* \cdot \ve . 
 \end{equation}
Defining  $\av^* \defeq |\M|\inv \vone -  \yy^*$, one can show using \eqref{sdeq} that
the Bell inequality determined by
$\av^*$ as the vector of coefficients and with bound $R = 0$
satisfies parts \ref{item:ineq} and \ref{item:tight} of the Theorem.
A detailed proof is provided in the supplemental material \footnote{See Supplemental Material at [URL will be inserted by publisher] for rigorous proofs and further details on our computational explorations.}.

\prlsection{Monotonicity}%
A key desideratum of a useful measure of contextuality is that it be a monotone for the free operations of a resource theory for contextuality.
A fuller treatment of this subject will be presented in a forthcoming article by the authors;
here, we consider the properties of the contextual fraction with respect to some of these operations.

We consider the following operations:  first, translation of measurements (including restriction to a smaller set of measurements, replication of measurements, etc.); secondly, coarse-graining of outcomes. Special cases of these give isomorphic relabelling of measurements and outcomes.
We also consider operations that combine two empirical models to build a new one.
The first of these is \emph{probabilistic mixing} with a weight $\lambda \in [0,1]$.
The second is \emph{controlled choice}:
given empirical models $e$ and $e'$ on scenarios $\XMO$ and $\tuple{X',\M',O}$ respectively,
$e \choice e'$ is defined on the scenario $\tuple{X \sqcup X',\M \sqcup \M',O}$
by $(e \choice e')_C := e_C$ for $C \in \M$ and $(e \choice e')_{C'} := e_{C'}$ for $C' \in \M'$. 
The third is a \emph{product}:
$e \otimes e'$ is an empirical model defined on the scenario $\tuple{X \sqcup X',\M \star \M',O}$, where
$\M \star \M' := \setdef{C \sqcup C'}{C \in \M, C' \in \M'}$, by $(e \otimes e')_{C\sqcup C'}\tuple{s,s'} := e_C(s) \, e'_{C'}(s')$ for all $C \in M$, $C' \in \M'$, $s \in O^C$, and $s' \in O^{C'}$.

These operations
can be used to construct any local empirical model on Bell scenarios
starting from a very simple ``generator'': a deterministic model over a single measurement.
This is illustrated in the supplemental material.

%
%
%

\thmCFmonotoneMAIN
A consequence of this result is that, for any of the combining operations, when $e'$ is a non-contextual model (and thus composing with $e'$ is a free operation),
$\CF$ acts as a monotone: the contextual fraction of the new model is at most that of $e$ (in fact, with equality holding for both choice and product).

\prlsection{Computational explorations}%
General computational tools in the form of a \textit{Mathematica} package have been developed implementing the two LPs described above to calculate,
for \stress{any} empirical model in \stress{any} scenario:
the (non-)contextual fraction, a decomposition of the form (2), and the generalised Bell inequality from Theorem 1-(ii) for which the maximal violation is achieved.
The package also calculates quantum empirical models from any (pure or mixed) state and any specified sets of compatible measurements.

As an example to illustrate the use of this package, we consider the empirical models obtained
from local measurements on various $n$-qubit states. On each qubit, we allow the same two local measurements, equatorial on the Bloch sphere, parametrised by angles $\phi_1$ and $\phi_2$.
Figure~\ref{fig:plots} plots the contextual fraction of the resulting models as a function of these angles.

\begin{figure*}
\begin{center}
\begin{tabular}{ccc}
\includegraphics[width=0.32\textwidth]{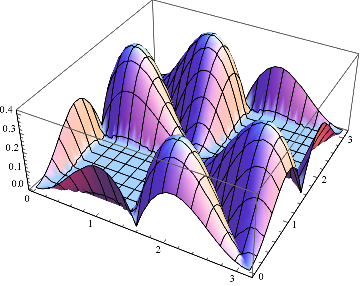} 
&
\includegraphics[width=0.32\textwidth]{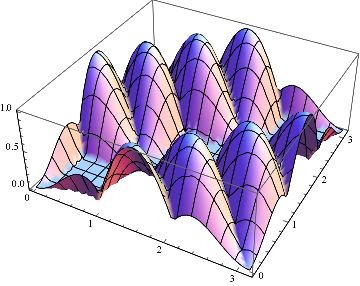}
&
\includegraphics[width=0.32\textwidth]{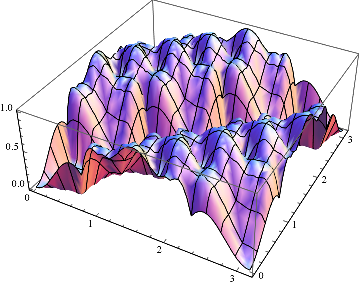}
\\
(a) & (b) & (c)
\end{tabular}
\end{center}
\caption{\label{fig:plots} Plots of the contextual fraction for empirical models obtained with projective measurements at $\phi_1$ and $\phi_2$ in the $X$--$Y$ plane for each qubit on the states: (a) the Bell state $\BellState \defeq \BellStateExpand$; (b) $\GHZStateN{3}$; (c) $\GHZStateN{4}$, where the $n$-partite GHZ state ($n > 2$) is given by $\GHZStateNN = \GHZStateNExpand$.
}
\end{figure*}

Computational explorations of this kind can be a useful tool for guiding research, pointing the way to conjectures and results (\eg \cite{AbramskyConstantinYing2015:HardyIsAlmostEverywhere,abramsky2014classification,Mansfield:DPhil-thesis}).
A more detailed analysis of the examples from Figure \ref{fig:plots},
leading to
the characterisation of a family of strongly contextual models arising from $n$-partite GHZ states,
can be found in the supplemental material.


\prlsection{Applications to quantum computation}%
Contextuality has been associated with quantum advantage in certain information-processing and computational tasks.
One use for a measure of contextuality is to quantify such advantages.

One computational model in which contextuality has been associated with an advantage is
measurement-based quantum computation (MBQC). An \ltMBQC\ is a process with $m$ classical bits of input and $l$ of output, using an $(n,2,2)$ empirical model ($n$ parties, $2$ measurement settings per party, $2$ outcomes per measurement) as a resource.
The classical control---which pre-processes the inputs, determines the flow of measurements by choosing which sites to measure next and with which measurement setting (potentially depending on previous outcomes), and post-processes to produce the outputs---can only perform $\ZZ_2$-linear computations.
The additional power to compute non-linear functions thus resides in certain resource empirical models.

In \cite[Theorem 2]{RaussendorfSC} it was shown that if an \ltMBQC\ process \stress{deterministically}
calculates a non-$\ZZ_2$-linear Boolean function $\fdec{f}{2^m}{2^l}$,
then the resource is necessarily strongly contextual.
A probabilistic version was also obtained in \cite[Theorem 3]{RaussendorfSC}: contextuality must be present
whenever a non-linear function is calculated with a sufficiently large probability of success.
By analysing that proof, we extract a sharpened version of this result
establishing a precise relationship between the hardness (non-linearity) of the function, the probability of success, and the contextual fraction. 

The \emph{average distance} between two Boolean functions $\fdec{f,g}{2^m}{2^l}$ is given by 
$\tilde{d}(f,g) \defeq 2^{-m}|\setdef{\ii \in 2^m}{f(\ii)\neq g(\ii)}|$.
The average distance of $f$ to the closest $\ZZ_2$-linear function is denoted by $\tilde{\nu}(f)$. 


\thmCFMBQCMAIN
Note that for deterministic computation ($\bar{p}_S=1$) of a non-linear function ($\tilde{\nu} > 0$), we require strong contextuality ($\NCF(e) = 0$), recovering the deterministic result in \cite{RaussendorfSC}.
More generally, for a given non-linear function, the higher the desired success probability the larger the contextual fraction must be.
Additional details, including a rigorous presentation and proof, may be found in the supplemental material.

Similar results can be obtained to quantify advantage in games, generalizing XOR games on Bell scenarios.
A game is specified by $n$ boolean formulae, one for each context, which describe the winning condition that the output must satisfy. If the formulae are $k$-consistent, meaning that at most $k$ of them have a joint satisfying assignment, then the \stress{hardness} of the game is measured by $\frac{(n-k)}{n}$.
One can show that $\bar{p}_F \geq \NCF(e) \frac{(n-k)}{n}$, relating the probability of success, the non-contextual fraction, and the hardness of the task.
See \cite{AbramskyHardy:LogicalBellIneqs} for the relation with Bell inequalities, from which a proof of this result follows.
Details are given in the supplemental material, Theorem~4.
Further development of these ideas is a topic for future research.

\prlsection{Acknowledgements}%
This work was done in part while the authors visited the Simons Institute for the Theory of Computing (supported by the Simons Foundation) at the University of California, Berkeley, as participants of the Logical Structures in Computation programme,
and while SM was based at l'Institut de Recherche en Informatique Fondamentale, Universit\'e Paris Diderot -- Paris 7.
Support from the following is also gratefully acknowledged:
EPSRC EP/N018745/1, Contextuality as a Resource in Quantum Computation (SA, RSB);
Fondation Sciences Math\'{e}matiques de Paris, postdoctoral research grant eotpFIELD15RPOMT-FSMP1, Contextual Semantics for Quantum Theory (SM);
the Oxford Martin School James Martin Program on Bio-inspired Quantum Technologies, (SA, SM, RSB);
FCT -- Funda\c{c}\~{a}o para a Ci\^encia e Tecnologia (Portuguese Foundation for Science and Technology), PhD grant SFRH/BD/94945/2013 (RSB).

\bibliography{refs_cf}
\fi

\ifbothtexts
\setcounter{affil}{0}
\newpage~\newpage
\fi

\setcounter{proposition}{3}

\ifsupplemental
\title{The Contextual Fraction as a Measure of Contextuality---Supplemental Material}
\reusableauthor{Samson Abramsky}%
\reusableaffiliation{\affilone}%
\reusableauthor{Rui Soares Barbosa}%
\reusableaffiliation{\affilone}%
\reusableauthor{Shane Mansfield}%
\reusableaffiliation{\affiltwo}%
\date{\today}

\reusablemaketitle

\appendix


\setcounter{equation}{5}
\setcounter{table}{1}
\setcounter{figure}{2}

\section{A. Contextual fraction and violations of Bell inequalities (Proof of Theorem~\ref{thm:CFBellineqs})}

Theorem~\ref{thm:CFBellineqs} establishes the close link between the contextual fraction of an empirical model and
the violation of generalised Bell inequalities by that model. In the main text, we presented this result along with
the broad idea of its proof,
with particular emphasis on the linear program that calculates the Bell inequality for which a given model achieves a maximal violation.
For completeness, we include here the detailed proof of the three statements.

As a preliminary remark, we note that the Strong Duality theorem of Linear Programming does indeed apply here, since the set of feasible solutions is bounded and non-empty. Indeed, non-emptiness holds since the zero vector is feasible, while boundedness holds since the constraints imply that every component of a vector satisfying them must lie in the unit interval.

\fullformtrue
\thmCFBellineqsSUPP
\begin{proof}
\textbf{\ref{item:atmost}}
This follows from the decomposition of $e$
into a non-contextual and a strongly contextual models,
\[e = \NCF(e) e^{NC} + \CF(e) e^{SC} \Mdot\]
For any Bell inequality, determined by $\tuple{\av,R}$, with $\|\av\| > R$,
the left-hand side of the inequality for the model $e$ adds up to
\begin{calculation}
\av \cdot \ve
\just={by decomposition above and linearity of $(\av \cdot \blank)$}
\NCF(e) \, \av \cdot \veNC \;+\; \CF(e) \, \av \cdot \veSC
\just\leq{since $\av \cdot \veNC \leq R$    
      and $\av \cdot \veSC \leq\|\av\|$}   
\NCF(e) \, R \;+\; \CF(e) \, \|\av\|
\ejust=
\NCF(e) \, R \;+\; \CF(e) \, R \;+\; \CF(e)\,(\|\av\| - R)
\just={by $\NCF(e) + \CF(e) = 1$}
R \;+\; \CF(e)\, (\|\av\|-R) \Mdot
\end{calculation}%
Therefore, since $\|\av\| > R$, we have, as required,
\[\frac{\av \cdot \ve - R}{\|\av\|-R} \;\leq\; \CF(e) \Mdot\]

\textbf{\ref{item:ineq}}
Recall the LP \eqref{LP:quantifying} that yields the (non-)contextual fraction of a model $e$.
Its symmetric dual is the following:
\begin{equation*}
\begin{alignedat}{3}
&\textLP{Find }       \;\;&&  \yy \in \RR^m
\\
&\textLP{minimising }   \;\;&& \ve \cdot \yy
\\
&\textLP{subject to } \;\;&& \transpose{\IM} \, \yy \,\geq\, \vone
\\
&\textLP{and }        \;\;&& \yy \,\geq\, \vzero
& \Mdot
\end{alignedat}
\end{equation*}
Under the transformation of variables
\begin{equation}\label{eq:defafromy}
\av \;\defeq\; |\M|\inv \vone \,-\, \yy \Mcomma
\end{equation}
where $|\M|$ is the number of maximal contexts in the scenario,
the above LP can then be equivalently restated as follows:
\begin{equation}\label{LP:Bellineq}
\begin{alignedat}{3}
&\textLP{Find }       \;\;&&  \av \in \RR^m
\\
&\textLP{maximising}   \;\;&& \av \cdot \ve
\\
&\textLP{subject to } \;\;&& \transpose{\IM} \, \av \,\leq\, \vzero
\\
&\textLP{and }        \;\;&& \av \,\leq\, |\M|\inv\vone
& \Mcomma
\end{alignedat}
\end{equation}
with solutions (resp. optimal solutions) of one optimisation problem corresponding bijectively to those of the other via the transformation \eqref{eq:defafromy}.
To see that this is the case, note that:
\begin{calculation}  
\transpose{\IM} \, \av\; \leq\;  \vzero
\just\bimplies{by eq. \eqref{eq:defafromy}}
\transpose{\IM} \, (|\M|\inv \vone \, - \,\yy)\; \leq\;  \vzero
\just\bimplies{by linearity}
|\M|\inv \transpose{\IM} \, \vone  -  \, \transpose{\IM} \, \yy \; \leq\;  \vzero
\just\bimplies{$\transpose{\IM}$ has exactly $|\M|$ 1-entries in each row}
|\M|\inv|\M|\vone  - \, \transpose{\IM} \, \yy \; \leq\;  \vzero
\ejust\bimplies
\transpose{\IM} \, \yy \; \geq\;  \vone 
\ejust{\text{and}}
\av \; \leq\; |\M|\inv \vone
\just\bimplies{by eq. \eqref{eq:defafromy}}
 |\M|\inv \vone - \yy \; \leq\; |\M|\inv \vone
\ejust\bimplies
\yy \; \geq\; \vzero \Mcomma
\end{calculation}%
showing that the feasibility conditions of the LPs \eqref{LP:dualquantifying} and \eqref{LP:Bellineq} are equivalent, and moreover that
\begin{calculationNUM}\label{eq:ave=1-yve}
\av \cdot \ve \nonumber
\just={by eq. \eqref{eq:defafromy}} 
\left(|\M|\inv \vone  - \yy \right) \cdot \ve \nonumber
\just={by linearity}
|\M|\inv \vone \cdot \ve - \yy \cdot \ve \nonumber
\just={$\ve$ consists of $|M|$ probability distributions}
1 - \yy \cdot \ve \Mcomma
\end{calculationNUM}%
showing that $\yy \cdot \ve$ is minimised exactly when $\av \cdot \ve$ is maximised.

The idea is that the components of a solution vector $\av$ (indexed by local assignments $\tuple{C,s}$)
are to be taken as the coefficients of an inequality, with bound $R=0$.

We first show that any feasible solution of the LP determines a Bell inequality,
\ie an inequality that is satisfied by all non-contextual models.
It suffices to show that it is satisfied by the deterministic non-contextual models---that is, those determined by a single global assignment $\fdec{g}{X}{O}$---since all other non-contextual models are convex combinations of these.
Recall that the columns of $\IM$ (and so the rows of $\transpose{\IM}$) are exactly the vectors representing these models.
Hence, the fact that $\av$ determines a Bell inequality is concisely expressed by the system of linear inequalities
\[
 \transpose{\IM} \, \av \;\leq\;  \vzero \Mcomma
\]
which is one of the feasibility conditions of our LP \eqref{LP:Bellineq}.


The other feasibility condition, $ \av \,\leq\, |\M|\inv \vone$,
is a bound on the components of $\av$.
This acts as a normalisation condition guaranteeing that,
for any feasible solution, the algebraic bound of the inequality (\ie its maximal violation) is at most $1$:
\begin{align*}
\|\av\| &\;=\; \sum_{C \in \M} \max \setdef{\av[\tuple{C,s}]}{s \in O^C}
\\ &\;\leq\; \sum_{C \in \M} |\M|\inv \\ & \;=\; 1 \Mdot
\end{align*}
Consequently,
\[
\max\{0,\av \cdot \ve\} 
\;\;\geq\;\;
(\av \cdot \ve) \; \max\{0,\|\av\|\} \Mcomma\]
and so (whenever the inequality corresponding to $\av$ is of any interest, \ie whenever $0 < \|\av\|$)
the normalised violation of the inequality by $e$ is at least $\av \cdot \ve$, the objective function that the LP maximises.
(In fact, for an optimal solution, the two bounds above are attained: provided $e$ is contextual, the algebraic bound of the optimal inequality is $1$ and its normalised violation by $e$ is simply given by $\av \cdot \ve$, as we shall see at the end of this item.)

We now show that an optimal solution, $\av^*$, to the LP yields our desired inequality, whose violation by $e$ is $\CF(e)$.
Let $\bb^*$ denote an optimal solution to the primal LP \eqref{LP:quantifying}, meaning that $\vone \cdot \bb^* = \NCF(e)$.
The Strong Duality theorem of Linear Programming (see e.g. \cite{DantzigThapa:LinearProgramming2})
then says that the dual LP
\eqref{LP:dualquantifying}
also admits an optimal solution $\yy^*$
and moreover that these two optimal solutions are related by
\begin{equation}\label{eq:LPstrongduality-yc}
\yy^* \cdot \ve\; = \; \vone \cdot \bb^* \Mdot
\end{equation}
Hence, writing $\av^*$ for the corresponding (via eq. \eqref{eq:defafromy})
optimal solution of \eqref{LP:Bellineq}, we have
\begin{calculation}
\av^* \cdot \ve
\just={by eq. \eqref{eq:ave=1-yve}}
1 - \yy^* \cdot \ve
\just={by eq. \eqref{eq:LPstrongduality-yc} (strong duality)}
1 - \vone \cdot \bb^*
\just={$\bb^*$ is optimal solution to LP \eqref{LP:quantifying}}
1 - \NCF(e)
\ejust=
\CF(e)
\end{calculation}%
This shows that the normalised violation of the inequality by the model $e$ is at least $\CF(e)$.
Since the opposite inequality follows from the first item,
this concludes the proof that the model $e$ attains a normalised violation of $\CF(e)$ of the Bell inequality
with coefficients $\av^*$ and bound $0$.
Incidentally, this implies in particular that the algebraic bound of this inequality, $\|\av^*\|$, is equal to $1$.

\textbf{\ref{item:tight}}
Consider any decomposition of the model $e$ (which satisfies $\CF(e)>0$) into a non-contextual and a strongly contextual parts,
with maximal possible weight on the former:
\begin{equation}\label{eq:pfitem3-decomp}
e \;=\; \NCF(e)\, e^{NC} \,+\, \CF(e) \, e^{SC} \Mdot
\end{equation}
We want to show that the non-contextual part of the model, $e^{NC}$, saturates the inequality from the previous item,
and that the strongly contextual part, $e^{SC}$, maximally violates it (\ie achieves the algebraic bound, which we know from the previous item to be $1$).
That is, the goal is to show that
\[\av^* \cdot \veNC = 0 \Mand \av^* \cdot \veSC = 1 \Mdot\]
Note that the inequalities
\begin{equation}\label{eq:av*veNCveSC-easyineq}
\av^* \cdot \veNC \leq 0 \Mand \av^* \cdot \veSC \leq 1 
\end{equation} 
follow from the fact that this is a Bell inequality for the bound $0$ and with algebraic bound $1$.

For the opposite inequalities, recall from the previous item that $\av^* \cdot \ve = \CF(e)$.
Thus, we have:
\begin{calculation}
\av^* \cdot \ve = \CF(e)
\just\bimplies{decomposition of eq. \eqref{eq:pfitem3-decomp} and linearity}
\NCF(e)\av^*\cdot\veNC + \CF(e)\av^* \cdot \veSC = \CF(e)
\ejust\bimplies
\NCF(e)\av^*\cdot\veNC + \CF(e)(\av^* \cdot \veSC - 1) = 0
\end{calculation}%
But the left-hand side now is a convex combination of two numbers which we know to be non-positive from eq. \eqref{eq:av*veNCveSC-easyineq}. This can only be equal to zero when we have
\[\NCF(e)\av^*\cdot\veNC = 0 \Mand \CF(e)(\av^* \cdot \veSC - 1) = 0\]
hence we have that
\[\av^* \cdot \veNC = 0 \Mand \av^* \cdot \veSC = 1 \]
as long as $\NCF(e) \neq 0$ and $\CF(e) \neq 0$, respectively.
\end{proof}

\begin{remark}
Incidentally, note that no use was made of the assumption that $\CF(e) > 0$ in the proof of item \ref{item:ineq}, and similarly for the first part of item \ref{item:tight}.
However, when $\CF(e)=0$, \ie when the model $e$ is non-contextual, it may happen that the Bell inequality obtained by the method described is trivial, in the sense that it is satisfied not only by all non-contextual models but also by all no-signalling ones.
In that situation, it is an inequality (in fact, an equality) defining the affine subspace of vectors corresponding to no-signalling models.
Indeed, this necessarily happens for models in the relative interior of the non-contextual polytope, for such a model cannot saturate any proper Bell inequality that separates the non-contextual from general no-signalling empirical models.
Therefore, even though the statement would strictly remain true given our definitions,
we have chosen to include the extra assumption to avoid it being misconstrued.
\end{remark}

As noted in the main text, decompositions of the form $e = \NCF(e)e^{NC} + \CF(e)e^{SC}$ are not necessarily unique.
This can happen when there is a face of the no-signalling polytope
consisting only of strongly contextual models (\ie whose vertices are all strongly contextual)
that is parallel to a face of the non-contextual polytope.
If these faces have dimension at least $1$ and the model $e$ lies in between them,
then any line going through $e$ and intersecting the two faces determines two models $e^{NC}$ and $e^{SC}$, corresponding to the intersections. For all these lines, the value of $\lambda$ in the decomposition 
$e = \lambda e^{NC} + (1-\lambda)e^{SC}$ will be the same.
An example of non-uniqueness in the $(3,2,2)$ Bell scenario is given by the models in Table~\ref{tab:non-unique-decomposition}.
On the other hand,
non-uniqueness cannot arise when the scenario is such that there are no two adjacent strongly contextual vertices of the no-signalling polytope, and hence no face of the polytope consisting solely of strongly contextual models: this is the case, for example, for the $(2,2,2)$ Bell scenario.


\begin{table*}
\caption{\label{tab:non-unique-decomposition}Empirical models $e^{SC}_1$, $e^{NC}_1$, $e^{SC}_2$, and $e^{NC}_2$ in a (3,2,2) Bell scenario, illustrating non-uniqueness of decomposition \eqref{eq:decomp}. The models on the left are strongly contextual, those on the right are non-contextual, and $\nfrac{1}{2}e^{NC}_1 + \nfrac{1}{2}e^{SC}_1 = \nfrac{1}{2}e^{NC}_2 + \nfrac{1}{2}e^{SC}_2$ is an empirical model with contextual fraction $\nfrac{1}{2}$.}
\begin{tabular}{ccc}
$e^{SC}_1$ & & $e^{NC}_1$
\\
$\ABCTableGHZ$ & \quad\quad\quad &  $\ABCTableLocalGHZ$
\\~\\~\\
$e^{SC}_2$ & & $e^{NC}_2$
\\
$\ABCTableOtherStronglyNL$ & & $\ABCTableOtherLocal$
\end{tabular}
\end{table*}

\section{B. Computational explorations}\label{appendix:computationalexplorations}

Computational tools in the form of a \textit{Mathematica} package have been implemented, which can compute:
\begin{enumerate}
\item
the empirical model arising from any quantum state and any sets of compatible measurements;
\item
the incidence matrix for any measurement scenario;
\item
the contextual fraction of any empirical model using LP \eqref{LP:quantifying}; 
\item
the Bell inequality of Theorem~\ref{thm:CFBellineqs},
using the dual LP \eqref{LP:dualquantifying} (under change of variables).
\end{enumerate}
We stress that these tools are completely general:
they can be applied to any pure or mixed quantum state in any Hilbert space and to any sets of compatible observables in that space,
including Bell scenarios as a special case.

\prlsection{Equatorial measurements on the Bell state $\BellState$}%
As an example of how the package can be used, we consider a family of empirical models that can be obtained by considering local measurements on the two-qubit Bell state
\[\BellState = \BellStateExpand\]
Recall that projective measurements on a qubit can equivalently be represented by a point on the Bloch sphere. Suppose that we allow the same two local measurements on each qubit, and that these are equatorial on the Bloch sphere, parametrised by angles $\phi_1$ and $\phi_2$ as in Figure \ref{fig:bloch}.
We assume these angles are in the interval $[0,\pi)$ since $\phi$ and $\phi+\pi$ correspond to the same measurement up to relabelling the outcomes.
One such model is the Bell--CHSH model from Table \ref{tab:BellPRmodel}, which is obtained when
\[(\phi_1,\phi_2) =  (0,\pi/3) \Mdot\]

\begin{figure}
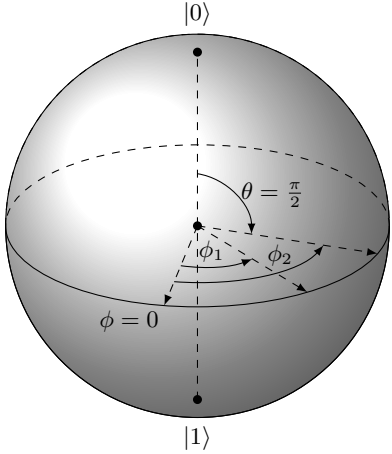

\EquatorialBloch
\caption{\label{fig:bloch} Equatorial measurements at $\phi_1$ and $\phi_2$ on the Bloch sphere.}
\end{figure}

We can use the package to plot the non-contextual fraction
of the resulting models as a function of $\phi_1$ and $\phi_2$ (Figure \ref{fig:plots}--(a)).
It is interesting to note that the Bell--CHSH model from Table \ref{tab:BellPRmodel}
does not achieve the maximum possible degree of contextuality among these models.
Instead, the maxima of the plot occur when
\[
\left\{\phi_1, \phi_2\right\} \in \left\{ \left\{ \frac{\pi}{8},\frac{5 \pi}{8} \right\}, \left\{ \frac{7 \pi}{8},\frac{3 \pi}{8} \right\} \right\} \Mdot
\]
All of the corresponding empirical models take the form of Table~\ref{tab:comp}, with
\[
p = \frac{ \sqrt{2} + 2}{8}.
\]
These can easily be seen to achieve the Tsirelson violation of the CHSH inequality.
Note that none of these models are strongly contextual: this observation provided one motivation
to look for proofs that Bell states cannot witness logical forms of contextuality with a finite number of measurements \cite{Mansfield:DPhil-thesis,AbramskyConstantinYing2015:HardyIsAlmostEverywhere},
although they do so at the limit where the number of measurement settings tends to infinity \cite{BarretKentPironio2006:MaximallyNonlocalAndMonogamousQuantumCorrelations}.

\begin{table}
\caption{\label{tab:comp}Empirical models corresponding to maxima of the plot shown in Figure \ref{fig:plots}--(a), where $p = \nfrac{\sqrt{2} + 2}{8}$. These achieve the Tsirelson violation of the CHSH inequality.}
\[ \ABTableNUM
{         p        & (\nfrac{1}{2}-p) &  (\nfrac{1}{2}-p)  &         p        }
{ (\nfrac{1}{2}-p) &         p        &          p         & (\nfrac{1}{2}-p) }
{ (\nfrac{1}{2}-p) &         p        &          p         & (\nfrac{1}{2}-p) }
{ (\nfrac{1}{2}-p) &         p        &          p         & (\nfrac{1}{2}-p) }
\]
\end{table}


It may seem surprising at first that the degree of contextuality of the empirical models is not constant with respect to the relative angle $(\phi_2 - \phi_1)$ between measurements, a fact that is apparent from the plot. For example, the empirical model obtained when $(\phi_1, \phi_2)=(0, \pi/2)$ is local, but if these values are shifted by $\pi/8$ the resulting empirical model achieves the maximum violation of the CHSH inequality.
As it happens, a rotation by $\Phi$ around the $Z$-axis for each of the qubits is described by
\begin{multline}\label{eq:rotation}
\left( \begin{array}{cc} e^{-i \Phi /2} & 0 \\ 0 & e^{i \Phi / 2} \end{array} \right) \otimes \left( \begin{array}{cc} e^{-i \Phi /2} & 0 \\ 0 & e^{i \Phi / 2} \end{array} \right) \\ = \left( \begin{array}{cccc} e^{-i \Phi} & 0 & 0 & 0 \\ 0 & 1 & 0 & 0 \\ 0 & 0 & 1 & 0 \\ 0 & 0 & 0 & e^{i \Phi} \end{array} \right) \Mdot
\end{multline}
If one equivalently thinks of leaving the measurements fixed and applying the rotations to the state instead,
note that this indeed introduces a relative phase of $2 \Phi$ between the terms in $\BellState$,
explaining the difference in the resulting empirical models.

\prlsection{Equatorial measurements on $n$-partite GHZ states}%
We can consider similar families of models for the $n$-partite GHZ states \cite{GHZ}, given for each $n>2$ by:
\begin{equation}\label{eq:GHZn-def}
\GHZStateNN = \GHZStateNExpand
\end{equation}
Note that with $n=2$ this would simply reduce to the $\BellState$ Bell state.
For $n>2$, Mermin considered the situation where each each of the $n$ parties can perform Pauli $X$ or $Y$ measurements,
and gave logical proofs of strong contextuality (nonlocality) via a parity argument which he called `all versus nothing'  \cite{Mermin1990:ExtremeQuantumEntanglementInASuperpositionOfMacroscopicallyDistinctStates}.
In \cite{AbramskyMansfieldBarbosa:Cohomology-QPL} it was shown that a very general form of all-\textit{vs}-nothing contextuality implies strong contextuality.

As before, we choose any two equatorial measurements on the Bloch sphere, and make these same two measurements available at each qubit.
For $\GHZStateN{3}$ and $\GHZStateN{4}$ we obtain the plots shown in Figures \ref{fig:plots}--(b) and \ref{fig:plots}--(c), respectively.
The maxima of the plot for the tripartite state reach $\CF(e)=1$, indicating strong contextuality, and occur when
\begin{equation}\label{eq:ghz3min}
\left\{\phi_1, \phi_2\right\} \in \left\{ \left\{\frac{\pi}{2},0\right\}, \left\{\frac{2 \pi}{3},\frac{\pi}{6}\right\}, \left\{\frac{5 \pi}{6},\frac{\pi}{3}\right\} \right\} \Mdot
\end{equation}
Of course, $(\phi_1,\phi_2) = (\pi/2,0)$ correspond to the Pauli measurements $Y$ and $X$, respectively, yielding the usual GHZ--Mermin model.
The empirical models corresponding to other maxima are identical up to re-labelling,
and so these provide alternative sets of measurements that can be made on the GHZ state and still lead to the familiar all-\textit{vs}-nothing argument for contextuality.

The situation is similar for $n=4$, in which maxima of $\CF(e)=1$ are seen to occur at
\begin{equation}\label{eq:ghz4min}
\left\{\phi_1, \phi_2\right\} \in
\left\{ \left\{\frac{\pi}{2},0\right\}, \left\{\frac{5 \pi}{8},\frac{\pi}{8}\right\}, \left\{\frac{3 \pi}{4},\frac{\pi}{4}\right\},\left\{\frac{7 \pi}{8},\frac{3 \pi}{8}\right\} \right\} \Mdot
\end{equation}
We can see a pattern beginning to emerge in (\ref{eq:ghz3min}) and (\ref{eq:ghz4min}), which leads to the following proposition.

\begin{proposition}\label{prop:equatorial-GHZn}
Equatorial measurements at
\[
(\phi_1, \phi_2) \in \setdef{ \left(\frac{(n+k) \, \pi}{2 n},\frac{k \, \pi}{2 n}\right)}{0 \leq k < n}
\]
on each qubit of the $\GHZStateNN$ state
give rise (up to relabelling of measurements and outcomes) to the strongly contextual GHZ--Mermin $n$-partite empirical model.
\end{proposition}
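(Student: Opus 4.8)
The plan is to compute the empirical model directly from the Born rule and then recognise it. First I would record that the equatorial measurement at angle $\phi$ on a qubit is the observable with normalised eigenvectors $\tfrac{1}{\sqrt 2}(\ket 0 + e^{i\phi}\ket 1)$ and $\tfrac{1}{\sqrt 2}(\ket 0 - e^{i\phi}\ket 1)$, to which I assign the outcomes $0$ and $1$. For a context in which party $i$ measures at angle $\theta_i\in\{\phi_1,\phi_2\}$ and the joint outcome $\vec o\in\{0,1\}^n$ is recorded, a one-line calculation against $\GHZStateNN = \GHZStateNExpand$ produces the amplitude $\tfrac{1}{\sqrt 2}\,2^{-n/2}\bigl(1+(-1)^{|\vec o|}e^{-i\Theta}\bigr)$, where $\Theta:=\sum_i\theta_i$ and $|\vec o|:=\sum_i o_i$, and hence
\[
P_{\vec c}(\vec o)\;=\;\frac{1}{2^n}\Bigl(1+(-1)^{|\vec o|}\cos\Theta\Bigr).
\]
So the probability table depends on the context only through $\Theta$ and on the outcome only through the parity of $|\vec o|$.

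Next I would substitute the prescribed angles. Letting $r$ denote the number of parties that measure at $\phi_1=\tfrac{(n+k)\pi}{2n}$ (the remaining $n-r$ measuring at $\phi_2=\tfrac{k\pi}{2n}$), the sum telescopes:
\[
\Theta\;=\;r\,\frac{(n+k)\pi}{2n}+(n-r)\,\frac{k\pi}{2n}\;=\;\frac{(r+k)\pi}{2},
\]
the crucial point being that this is an integer multiple of $\pi/2$ for every context, no matter which parties chose which setting. Consequently $\cos\Theta\in\{0,\pm1\}$: it vanishes when $r+k$ is odd, and equals $(-1)^{(r+k)/2}$ when $r+k$ is even. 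Feeding this back in, the model is uniform on $\{0,1\}^n$ in every context with $r+k$ odd, and in every context with $r+k$ even it is uniformly supported, each atom of weight $2^{-(n-1)}$, on the outcomes with $|\vec o|\equiv\tfrac{r+k}{2}\pmod 2$. This is precisely the combinatorial form of the GHZ--Mermin `all versus nothing' model.

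Finally I would identify this with the standard GHZ--Mermin model, which is the $k=0$ instance: there $\phi_1=\pi/2$ and $\phi_2=0$ are the Pauli $Y$ and $X$ measurements and the description above is Mermin's table (deterministic parity correlations on the contexts with an even number of $Y$'s, uniform elsewhere). For general $k$ the only discrepancies with the $k=0$ model are: (a) whether the deterministic contexts are those with $r$ even or $r$ odd, a mismatch that occurs exactly when $k$ is odd and is removed by interchanging the two measurement labels at a single party, which flips the parity of $r$ in every context; and (b) a possible constant flip of the support parity, removed by relabelling the two outcomes of both settings at a single party, which flips $|\vec o|$ in every context. Both are legitimate isomorphisms of the $(n,2,2)$ scenario, and composing them carries our model onto the GHZ--Mermin model (whence, since strong contextuality is a relabelling invariant---e.g. $\CF=1$ is, by Theorem~\ref{thm:CFmonotone}---our models are strongly contextual). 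The main obstacle is not the Born-rule computation, which collapses cleanly thanks to the choice of angles, but the bookkeeping in this last step: one has to check, according to $k\bmod 4$, that a swap of settings at one site together with a swap of outcomes at one site really does reconcile the two tables, and that the resulting map is an isomorphism of measurement scenarios in the sense of the framework.
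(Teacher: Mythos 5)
Your route is genuinely different from the paper's. The paper never computes the probability table: it conjugates by $Z$-rotations, sending the measurements at angles $\frac{(n+k)\pi}{2n}$ and $\frac{k\pi}{2n}$ back to Pauli $Y$ and $X$ while the state picks up a relative phase $e^{ik\pi/2}$, and then removes that phase by rotating a \emph{single} qubit by $\pi/2$ a total of $k$ times, each such step sending $X\mapsto -Y$ and $Y\mapsto X$ at that qubit---manifestly a relabelling of measurements and outcomes. Your direct Born-rule computation, giving $P_{\vec c}(\vec o)=2^{-n}\bigl(1+(-1)^{|\vec o|}\cos\Theta\bigr)$ with $\Theta=(r+k)\pi/2$, is correct and arguably more informative (it characterises all equatorial choices at once and exhibits the table explicitly), at the price of having to reconcile two combinatorial tables by hand at the end.

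That last step is where there is a genuine gap. After a plain interchange of the two setting labels at a single party, the support parity of a deterministic context is \emph{not} a function of the new count $r'$ of $\phi_1$-settings alone: the old count is $r=r'\pm 1$ according to whether the distinguished party uses setting $1$ or setting $2$ in that context, and $(r+k)/2 \bmod 2$ is sensitive to this $\pm 1$, not merely to the parity of $r$. So the residual discrepancy with the $k=0$ table is context-dependent, and flipping both outcomes at a site---which adds a constant $1$ to $|\vec o|$ in every context---cannot repair it. Concretely, for $n=3$, $k=1$ the plain swap at party $1$ yields deterministic support parities $1,1,1,0$ on the contexts $\emptyset,\{1,2\},\{1,3\},\{2,3\}$, versus $0,1,1,1$ for Mermin; no constant shift reconciles these. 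The correct relabelling couples the setting swap with an outcome flip on exactly \emph{one} of the two swapped settings at that party: the extra $\pm 1$ from the flip then cancels the $\pm 1$ in $r$, and the residue becomes the constant $(k-1)/2$, removable as you describe. This is precisely the relabelling the paper's rotation argument produces automatically, since a $\pi/2$ rotation sends $X\mapsto -Y$ (swap with outcome flip) but $Y\mapsto X$ (swap without).
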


\begin{proof}
First, we know that this holds for $k = 0$, since in that case we simply have local Pauli $X$ and $Y$ measurements, the measurements prescribed in Mermin's argument.
For $0<k<n$, we can apply at each qubit a rotation around the $Z$ axis by the phase $\Phi \defeq \nfrac{k \, \pi}{2n}$, so that we continue to deal with the $X$ and $Y$ measurements.
It is necessary, however, to take account of the relative phase introduced by these operations on the $n$-qubit state.
By generalising (\ref{eq:rotation}) it is clear that the state obtained from $\GHZStateNN$ after rotating each qubit by $\Phi$ is
\[
\ket{\text{GHZ}(n,\Phi)} = \frac{1}{\sqrt{2}} \left( \ket{0 \cdots 0} + e^{i n \Phi} \ket{1\cdots1} \right) \Mdot
\]
Notice that for the relevant values of $\Phi$ the relative phase is a multiple of $\nfrac{\pi}{2}$.
Rotating only one of the qubits by $\nfrac{\pi}{2}$ an appropriate number of times brings us back to the state $\GHZStateNN$,
while each step changes the measurements $X$ and $Y$ to $-Y$ and $X$, respectively. Since $-Y$ is just the measurement $Y$ with a relabelling of the outcomes,
the whole change simply amounts to relabelling of measurements and  their outcomes.
\end{proof}

\section{C. Monotonicity (Proof of Theorem~\ref{thm:CFmonotone})}

First, we formally  define the operations of translation of measurements and coarse-graining of outcomes.
For the former, given an empirical model $e$ on the measurement scenario $\XpMpO$, a second measurement scenario $\XMO$, and a function $\fdec{f}{X}{X'}$ that preserves contexts,
\ie such that $C \in \M$ implies
$f(C) \subseteq C'$ for some $C' \in \M'$,
we define the empirical model $f^*e$ in $\XMO$ by pulling $e$ back along the map $f$: for each $C \in \M$ and $s \in O^C$,
\[(f^*e)_C(s) \defeq \sum_{t\in O^{f(C)}, t \circ f|_{C} = s}e_{f(C)}(t) \Mdot\]
For the latter, given an empirical model $e$ on the measurement scenario $\XMOp$ and a function $\fdec{h}{O'}{O}$, we define an empirical model $e/f$ on the scenario $\XMO$
as follows: for each $C \in \M$ and $s \in O^C$ 
\[(e/h)_C(s) \defeq \sum_{s'\in O'^C, h \circ s' = s}e_C(s')\Mdot\]

Two of the three combining operations were already introduced formally in the main text.
As for mixing, it has the obvious definition: given empirical models $e$ and $e'$ in $\XMO$ and a weight $\lambda \in [0,1]$,
the empirical model $\lambda e + (1-\lambda) e'$ is defined, for each $C \in \M$ and $s \in O^C$, as
\[(\lambda e + (1-\lambda) e')_C(s) \defeq \lambda \, e_C(s) \; + \; (1-\lambda) \, e'_C(s) \Mdot\]

We illustrate the use of the operations to construct local models for Bell scenarios, as mentioned in the main text.
We start from the generator $G = \enset{ m \longmapsto *}$, which sends a single measurement, deterministically, to a single outcome. We define functions $\fdef{f_i}{*}{i}$, $i \in \enset{1,\ldots,l}$ to relabel outcomes of measurements into the set $O = \enset{ 1,\ldots, l}$.
A deterministic model for a single agent with $k$ measurements $\enset{m_1,\ldots, m_k}$,
where measurement $m_j$ is assigned outcome $i_j$, is described (up to isomorphic relabelling) by
$\&_{j=1}^k G/f_{i_j}$.
Let $A$ be one such model, and $B$ another. Then the corresponding bipartite local model is described by $A \otimes B$. This obviously generalizes to any number of parties. Finally, any local model can be expressed as a mixture of deterministic local models.

We now present a detailed proof of Theorem~\ref{thm:CFmonotone}
which states the monotonicity properties of the measure of contextuality $\CF$ with respect to these five operations.

\thmCFmonotoneSUPP
\begin{proof}
We shall prove the equivalent statements in terms of $\NCF$ instead of $\CF$.

\textbf{Translation of measurements.} 
Let $e$ be a model in $\XpMpO$ and  $\fdec{f}{X}{X'}$ context-preserving.
$\NCF(e)$ is the maximal weight of a subprobability distribution $b$ on global assignments $O^{X'}$ such that $b|_{C'} \leq e_{C'}$ for any $C' \in \M'$.
Let $b_e$ be such a probability distribution of maximal weight
\ie the corresponding vector $\bb^*$ is an optimal solution to the LP \eqref{LP:quantifying}.

Define $f^*b_e$ a subprobability distribution on $O^X$ by, for any $g \in O^X$,
\begin{equation}\label{eq:defb-meas}
 (f^*b_e)(g) \defeq \sum_{g' \in O^{X'}, g' \circ f = g} b_e(g') \Mdot
\end{equation}
Note that this $f^*b_e$ has the same weight as $b_e$ since each $g' \in O^{X'}$ contributes to a single $g \in O^X$.

For any $C \in \M$ and $s \in O^C$, we have
\begin{calculation}
f^*b_e|_C(s)
\just={definition of marginalisation}
\sum_{g \in O^X, g|_C = s} f^*b_e(g)
\just={definition of $f^*b_e$, eq. \eqref{eq:defb-meas}}
\sum_{g \in O^X, g|_C = s} \;\, \sum_{g' \in O^{X'}, g' \circ f = g} b_e(g')
\ejust=
\sum_{g' \in O^{X'},  (g' \circ f)|_C=s}b_e(g')
\ejust=
\sum_{g' \in O^{X'},  g'|_{f(C)} \circ f|_C = s}b_e(g')
\ejust=
\sum_{t \in O^{f(C)}, t \circ f|_C = s} \;\, \sum_{g' \in O^{X'}, g'|_{f(C)} = t}b_e(g')
\just={definition of marginalisation}
\sum_{t \in O^{f(C)}, t \circ f|_C = s}b_e|_{f(C)}(t)
\just\leq{$b_e|_{f(C)}$ is subdistribution of $e_{f(C)}$}
\sum_{t \in O^{f(C)}, t \circ f|_C = s}e_{f(C)}(t)
\just={definition of $f^*e$}
(f^*e)_C(s)
\end{calculation}%
\ie $f^*b_e$ corresponds to a feasible solution to the LP for model $f^*e$. Hence,
\[\NCF(f^*e) \geq \weight{f^*b_e}
= \weight{b_e} = \NCF(e) \Mdot\]

\textbf{Coarse-graining of outcomes.}
Let $e$ be a model in $\XMOp$ and $\fdec{h}{O'}{O}$.
Again, write $b_e$ for a subprobability distribution on ${O'}^X$ of maximal weight $\NCF(e)$
satisfying $b_e|_C \leq e_C$ for all $C \in \M$.

Define $b_e/h$ a subprobability distribution on $O^X$ by, for any $g \in O^X$,
\begin{equation}\label{eq:defc-outcomes}
(b_e/h)(g) \defeq \sum_{g' \in O'^X, h \circ g' = g}b_e(g')\Mdot
\end{equation}
Similarly, note that $b_e/h$ has the same weight as $b_e$ since each $g' \in O'^{X}$ contributes to a single $g \in O^X$.

For any $C \in \M$ and $s \in O^C$, we have
\begin{calculation}
(b_e/h)|_C(s)
\just={definition of marginalisation}
\sum_{g \in O^X, g|_C = s} (b_e/h)(g)
\just={definition of $b_e/h$, eq. \eqref{eq:defc-outcomes}}
\sum_{g \in O^X, g|_C = s} \;\, \sum_{g' \in O'^X, h \circ g' = g}b_e(g')
\ejust=
\sum_{g' \in O'^X, (h \circ g')|_C = s}b_e(g')
\ejust=
\sum_{g' \in O'^X, h \circ g'|_C = s}b_e(g')
\ejust=
\sum_{s'\in O'^C, h \circ s' = s} \;\, \sum_{g' \in O'^X, g'|_C = s'}b_e(g')
\just={definition of marginalisation}
\sum_{s'\in O'^C, h \circ s' = s}b_e|_C(s')
\just\leq{$b_e|_C$ is subdistribution of $e_C$}
\sum_{s'\in O'^C, h \circ s' = s}e_C(s')
\just={definition of $e/h$}
(e/h)_C(s)
\end{calculation}%
\ie $b_e/h$ corresponds to a feasible solution to the LP for model $e/h$.
Therefore,
\[\NCF(e/h) \geq \weight{b_e/h}
= \weight{b_e} = \NCF(e) \Mdot\]


\textbf{Mixing.}
Let $e_1$ and $e_2$ be models in $\XMO$ and $\lambda \in [0,1]$. 
Note that in terms of vector representation we have
$\veA{\lambda e_1 + (1-\lambda) e_2} = \lambda \veA{e_1} + (1-\lambda) \veA{e_2}$.

Let $\bb^*_i$ be an optimal solution to the LP \eqref{LP:quantifying} relative to model $e_i$ ($i \in \enset{1,2}$),
and set $\bb \defeq \lambda \bb^*_1 + (1-\lambda) \bb^*_2$.
Then, $\bb \geq 0$ follows from non-negativity of $\bb^*_1$ and $\bb^*_2$,
and moreover
\begin{calculation}
\IM \, \bb
\ejust=
\IM \, (\lambda \bb^*_1 + (1-\lambda) \bb^*_2)
\just={linearity}
\lambda \IM \, \bb^*_1 + (1-\lambda) \IM \, \bb^*_2
\just\leq{feasibility of $\bb^*_i$, $\IM \, \bb^*_i \leq \veA{e_i}$}
\lambda \veA{e_1}  + (1-\lambda) \veA{e_2}
\ejust=
\veA{\lambda e_1 + (1-\lambda) e_2} \Mdot
\end{calculation}%
This means that $\bb$ is a feasible solution to the LP relative to the model $\lambda e_1 + (1-\lambda)e_2$,
achieving the following value of the objective function:
\begin{calculation}
\vone \cdot \bb
\ejust=
\vone \cdot (\lambda \bb^*_1 + (1-\lambda) \bb^*_2)
\just={linearity}
\lambda (\vone \cdot \bb^*_1) + (1-\lambda)(\vone \cdot \bb^*_2)
\just={optimality of $\bb^*_i$, $\vone \cdot \bb^*_i = \NCF(e_i)$}
\lambda \, \NCF(e_1) \,+\, (1-\lambda) \, \NCF(e_2) \Mdot
\end{calculation}%
Since $\NCF(\lambda e_1 + (1-\lambda) e_2)$ is the optimal (maximum) value for this primal LP, we have that
\[\NCF(\lambda e_1 + (1-\lambda)e_2) \,\geq\, \lambda \,\NCF(e_1) \,+\, (1-\lambda) \, \NCF(e_2)\]
as desired.

\textbf{Product.}
Let $e_1$ and $e_2$ be models on $\XMOone$ and $\XMOtwo$, respectively.
Note that global assignments for the scenario $\tuple{X_1 \sqcup X_2, \M_1 \star \M_2,O}$
are in bijective correspondence with tuples $\tuple{g_1,g_2}$ where $\fdec{g_i}{X_i}{O}$ is a global assignment for $\XMOi$ ($i\in\enset{1,2}$).
Similarly, contexts $C \in \M_1 \star \M_2$ are those of the form $C = C_1 \sqcup C_2$ with 
$C_i \in \M_i$, hence they are in bijective correspondence with pairs $\tuple{C_1,C_2}$ of contexts with $C_i \in \M_i$. Consequently,
local assignments $\tuple{C_1 \sqcup C_2 \in \M_1\star\M_2, \fdec{s}{C_1 \sqcup C_2}{O}}$
of the scenario $\tuple{X_1 \sqcup X_2, \M_1 \star \M_2,O}$
are in bijective correspondence with pairs of local assignments for each of the scenarios,
\ie pairs
$\tuple{\tuple{C_1,s_1},\tuple{C_2,s_2}}$ with $\tuple{C_i \in \M_i,\fdec{s_i}{C_i}{O}}$.
We use these equivalent representations to index the representation
of empirical models as vectors, the incidence matrix, etc. for the scenario  $\tuple{X_1 \sqcup X_2, \M_1 \star \M_2,O}$.

Observe that, in terms of the vector representations,
the product empirical model $e_1 \otimes e_2$ is concisely written as
$\veA{e_1 \otimes e_2} = \veA{e_1} \otimes \veA{e_2}$,
since for any local assignments $\tuple{C_i \in \M_i, \fdec{s_i}{C_i}{O}}$,
\[\veA{e_1 \otimes e_2}[\tuple{s_1,s_2}] = \veA{e_1}[s_1] \veA{e_2}[s_2] \Mdot\]
Moreover, if $\IM_1$ and $\IM_2$ are the incidence matrices for each of the measurement scenarios,
then the incidence matrix $\IM$ for the scenario $\tuple{X_1 \sqcup X_2, \M_1 \star \M_2,O}$ is precisely
$\IM = \IM_1 \otimes \IM_2$
since, for global assignments $\fdec{g_i}{X_i}{O}$ and local assignments $\tuple{C_i \in \M_i, \fdec{s_i}{C_i}{O}}$,
\begin{align*}
& \IM[\tuple{g_1,g_2}, \tuple{\tuple{C_1,s_1},\tuple{C_2,s_2}}] 
\\ = \; &
\begin{cases}
 1 \;\text{ if $\tuple{g_1,g_2}|_{C_1 \sqcup C_2} = \tuple{s_1,s_2}$;}
\\ 
 0 \;\text{ otherwise.}
\end{cases}
\\ = \; &
\begin{cases}
 1 \;\text{ if $g_1|_{C_1} = s_1$ and $g_2|_{C_2} = s_2$;}
\\ 
 0 \;\text{ otherwise.}
\end{cases}
\\ = \; &
\IM_1[g_1,\tuple{C_1,s_1}] \, \IM_2[g_2,\tuple{C_2,s_2}] \Mdot
\end{align*}

Let $\bb^*_1$ and $\bb^*_2$ be optimal solutions to the primal LP \eqref{LP:quantifying} relative to $e_1$ and $e_2$, respectively,
and set $\bb \defeq \bb^*_1 \otimes \bb^*_2$. 
Then we have $\bb \geq \vzero$ from the non-negativity of the $\bb^*_i$, and moreover
\begin{calculation}
\IM \, \bb
\ejust=
(\IM_1 \otimes \IM_2)\,(\bb^*_1 \otimes \bb^*_2)
\ejust=
(\IM_1 \, \bb^*_1) \otimes (\IM_2 \, \bb^*_2)
\just\leq{by feasibility of $\bb^*_i$, $\IM_i\,\bb^*_i \leq \veA{e_i}$}
\veA{e_1} \otimes \veA{e_2}
\ejust=
\veA{e_1 \otimes e_2} \Mcomma
\end{calculation}%
hence $\bb$ is a feasible solution to the primal LP  relative to $e_1 \otimes e_2$,
achieving the following value of the objective function:
\begin{calculation}
\vone \cdot \bb
\ejust=
(\vone \otimes \vone) \cdot (\bb^*_1 \otimes \bb^*_2) 
\ejust=
(\vone \cdot \bb^*_1) (\vone \cdot \bb^*_2)
\just={by optimality of $\bb^*_i$, $\vone \cdot \bb^*_i = \NCF(e_1)$}
\NCF(e_1)\NCF(e_2) \Mdot
\end{calculation}%
Since $\NCF(e_1 \otimes e_2)$ is the optimal (maximum) value for this primal LP, we have that
\[\NCF(e_1 \otimes e_2) \geq \NCF(e_1)\NCF(e_2) \Mdot\]

For the opposite inequality, we follow an analogous argument using the dual LP \eqref{LP:dualquantifying}.
Let $\yy^*_1$ and $\yy^*_2$ be optimal solutions to the dual LP for $e_1$ and $e_2$, respectively,
and set $\yy \defeq \yy^*_1 \otimes \yy^*_2$.
Then we have $\yy \geq 0$ from the non-negativity of the $\yy^*_i$, and moreover
\begin{calculation}
\transpose{\IM} \, \yy
\ejust=
(\transpose{\IM_1} \otimes \transpose{\IM_2})\,(\yy^*_1 \otimes \yy^*_2)
\ejust=
(\transpose{\IM_1} \, \yy^*_1) \otimes (\transpose{\IM_2} \, \yy^*_2)
\just\geq{by feasibility of $\yy^*_i$, $\transpose{\IM_i}\,\yy^*_i \geq \vone$}
\vone \otimes \vone
\ejust=
\vone
\end{calculation}%
hence $\yy$ is a feasible solution to the dual LP relative to $e_1 \otimes e_2$,
achieving the following value of the objective function:
\begin{calculation}
\yy \cdot \veA{e_1 \otimes e_2}
\ejust=
(\yy^*_1 \otimes \yy^*_2) \cdot (\veA{e_1} \otimes \veA{e_2})
\ejust=
(\yy^*_1 \cdot \veA{e_1}) (\yy^*_2 \cdot \veA{e_2})
\just={by optimality of $\yy^*_i$, $\yy^*_i \cdot \veA{e_i} = \NCF(e_i)$}
\NCF(e_1)\NCF(e_2) \Mdot
\end{calculation}%
Since $\NCF(e_1 \otimes e_2)$ is the optimal (minimum) value for this dual LP, we have that
\[\NCF(e_1 \otimes e_2) \leq \NCF(e_1)\NCF(e_2) \Mdot\]

\textbf{Choice.}
Let $e_1$ and $e_2$ be models on $\XMOone$ and $\XMOtwo$, respectively,
and consider the model $e_1 \choice e_2$ on $\tuple{X_1 \sqcup X_2, \M_1 \sqcup \M_2, O}$.
Write $b_{e_1 \choice e_2}$ for a subprobability distribution on global assignments $O^{X_1 \sqcup X_2} \cong O^{X_1} \times O^{X_2}$
of maximal weight $\NCF(e_1 \choice e_2)$ that satisfies
$b_{e_1 \choice e_2}|_C \leq (e_1 \choice e_2)_C$ for any $C \in \M_1 \sqcup \M_2$.

For $i \in \enset{1,2}$, define a subprobability distribution $b_i$ on $O^{X_i}$ by $b_i \defeq b_{e_1 \choice e_2}|_{X_i}$.
Note that each $b_i$ has the same weight as $b_{e_1 \choice e_2}$.
Then, for any $C \in \M_i$ and $s \in O^C$, we have
\begin{calculation}
b_i|_C(s)
\just={definition of $b_i$}
(b_{e_1 \choice e_2}|_{X_i})|_C(s)
\ejust= 
b_{e_1 \choice e_2}|_C(s)
\just\leq{by feasibility of $b_{e_1 \choice e_2}$ for $e_1 \choice e_2$}
(e_1 \choice e_2)_C(s)
\just={definition of $e_1 \choice e_2$}
(e_i)_C(s)
\end{calculation}%
That is, $b_i$ corrresponds to a feasible solution of the LP for the model $e_i$,
implying that 
\[\NCF(e_i) \geq \weight{b_i} = \weight{b_{e_1 \choice e_2}} = \NCF(e_1 \choice e_2) \Mdot\]
Therefore, $\NCF(e_1 \choice e_2) \leq \min\{\NCF(e_1),\NCF(e_2)\}$.

For the opposite inequality,
let $b_{e_1}$ and $b_{e_2}$ be subprobability distributions on $O^{X_1}$ and $O^{X_2}$, respectively,
corresponding to optimal solutions to the LP for $e_1$ and $e_2$, respectively.
The goal is to define a subprobability distribution $b$ on $O^{X_1 \sqcup X_2} \cong O^{X_1} \times O^{X_2}$
with weight $\weight{b} = \min\{\weight{b_{e_1}},\weight{b_{e_2}}\}$
such that
\begin{equation}\label{eq:condition-cmarg}
b|_{X_1} \leq b_{e_1} \Mand b|_{X_2} \leq b_{e_2} \Mdot
\end{equation}
This condition guarantees that this is a feasible solution to the LP for $e_1 \choice e_2$,
since, for any context $C \in \M_1 \sqcup \M_2$, writing $i$ for the component to which $C$ belongs, we have:
\begin{calculation}
b|_C
\ejust=
b|_{X_i}|_C 
\just\leq{by eq. \eqref{eq:condition-cmarg}: $b|_{X_i} \leq b_{e_i}$}
b_{e_i}|_C
\just\leq{by feasibility of $b_{e_i}$ for $e_i$}
(e_i)_C
\just={definition of $e_1 \choice e_2$}
(e_1 \choice e_2)_C
\end{calculation}%
This in turn implies that
\begin{align*}
\NCF(e_1 \choice e_2) &\geq \weight{b}
\\ &= \min\enset{\weight{b_{e_1}},\weight{b_{e_2}}}
\\ &= \min\enset{\NCF(e_1),\NCF(e_2)} \Mdot
\end{align*}

It thus remains to show that this $b$ can be contructed. This is achieved by Lemma~\ref{lemma:subdistchoice} proved below. 
%
%
%
%
%
%
%
%
%
%
\end{proof}

\begin{lemma}\label{lemma:subdistchoice-sameweight}
Let $b_S$ and $b_T$ be subprobability distributions on sets $S$ and $T$, respectively, with the same weight $\weightz$.
Then there exists a subprobability distribution $b$ on $S \times T$ with weight $\weightz$ whose marginals are the original subdistributions.
\end{lemma}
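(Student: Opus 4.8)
The plan is to take $b$ to be a rescaled independent coupling of $b_S$ and $b_T$. Since these subdistributions have a common weight $\weightz \le 1$, I would first dispatch the degenerate case: if $\weightz = 0$ then $b_S$ and $b_T$ are both identically zero, and we may simply set $b \defeq 0$ on $S \times T$; its weight is $0 = \weightz$ and both marginals vanish, so the claim holds trivially. Hence assume $\weightz > 0$ from now on.

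For $\weightz > 0$, define $\fdec{b}{S \times T}{\RRpz}$ by $b(s,t) \defeq \weightz\inv\, b_S(s)\, b_T(t)$. Non-negativity of $b$ is immediate, and its support is contained in $\supp b_S \times \supp b_T$, hence finite. First I would compute its weight: $\weight{b} = \weightz\inv \bigl(\sum_{s \in S} b_S(s)\bigr)\bigl(\sum_{t \in T} b_T(t)\bigr) = \weightz\inv\,\weightz^{2} = \weightz \le 1$, so $b$ is a genuine subprobability distribution of the required weight. Next I would verify the marginals: for fixed $s \in S$, the $S$-marginal of $b$ at $s$ is $\sum_{t \in T} b(s,t) = \weightz\inv\, b_S(s) \sum_{t \in T} b_T(t) = \weightz\inv\, b_S(s)\,\weightz = b_S(s)$, and symmetrically the $T$-marginal of $b$ recovers $b_T$. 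This is exactly the assertion.

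I do not expect a genuine obstacle here: the content of the lemma is just that two finite measures of equal total mass always admit a product coupling, and the verification reduces to the two short computations above. The only point needing a little care is the degenerate case $\weightz = 0$, where the normalisation $\weightz\inv$ is undefined and must be handled separately, as indicated. (This is the lemma invoked in the Choice case of Theorem~\ref{thm:CFmonotone}; the unequal-weight version used there follows by first trimming mass from the heavier of the two subdistributions down to the common weight $\min\enset{\weight{b_{e_1}},\weight{b_{e_2}}}$ and then applying the present lemma.)
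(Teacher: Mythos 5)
Your proof is correct, and it takes a genuinely different and substantially simpler route than the paper. The paper constructs what is essentially a monotone (quantile/inverse-CDF) coupling: it enumerates the supports of $b_S$ and $b_T$, forms the cumulative-mass intervals $[L_S(s),R_S(s)]$ and $[L_T(t),R_T(t)]$, and sets $b(s,t)$ to the length of their overlap, which then requires a page of bookkeeping with the sets $\Lset{s}$, $\Oset{s}$, $\Rset{s}$ to verify the marginals. You instead take the normalised product coupling $b(s,t) = \weightz\inv\, b_S(s)\, b_T(t)$, for which the weight and marginal computations are immediate. Since the lemma asserts only the \emph{existence} of a coupling with the prescribed marginals, and its sole use (via Lemma~\ref{lemma:subdistchoice}, in the Choice case of Theorem~\ref{thm:CFmonotone}) likewise needs nothing beyond the marginal conditions, your argument fully suffices; the paper's more elaborate construction buys nothing that is used elsewhere. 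Your separate treatment of the degenerate case $\weightz = 0$, where the normalisation is undefined, is the right precaution (the paper absorbs this case into Lemma~\ref{lemma:subdistchoice} instead), and your closing remark about reducing the unequal-weight version by rescaling the heavier subdistribution matches the paper's Lemma~\ref{lemma:subdistchoice} exactly.
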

\begin{proof}
Let $S_+ \subseteq S$ and $T_+ \subseteq T$ stand for the (finite) supports of the distributions $b_S$ and $b_T$, respectively.
Choose (any) total orderings of the sets $S_+$ and $T_+$, \ie:
\[S_+ = \enset{s_1, \ldots, s_n} \Mand T_+ = \enset{t_1, \ldots, t_m}\Mdot\]
Define the functions $\fdec{L_S,R_S}{S_+}{[0,\weightz]}$ as follows: for each $s_i \in S_+$,
\begin{align*}
L_S(s_i) &\defeq \sum_{1 \leq j < i}b_S(s_j) 
\\
R_S(s_i) &\defeq \sum_{1 \leq j \leq i}b_S(s_j)
\Mdot
\end{align*}
Note that $b_S(s) = R_S(s) - L_S(s)$, and moreover that $R_S(s_i) = L_S(s_{i+1})$ for all $i \in \enset{1,\ldots,n-1}$.
$R_T$ and $L_T$ are defined analogously based on the subprobability distribution $b_T$.

The subprobability distribution $b$ on $S \times T$
is given,
for any $s \in S_+$ and $t \in T_+$, as
\[
b(s,t) \defeq \min\enset{R_S(s), R_T(t)} \ominus \max\enset{L_S(s),L_T(t)}
\]
where $r \ominus l \defeq \max\enset{r-l,0}$,
and by $b(s,t) \defeq 0$ whenever $s \in S \setminus S_+$ or $t \in T \setminus T_+$.

We show that $b|_S = b_S$; the proof that $b|_T = b_T$ is analogous. Note that this also means that $\weight{b} = \weight{b_S} = \weightz$.

Each $s \in S_+$ determines a partition of $T_+$ into the following disjoint subsets (the letters stand for `to the left', `overlapping', and `to the right'):
\begin{align*}
\Lset{s}  &\defeq \setdef{t \in T_+}{R_T(t) \leq L_S(s)}
\\
\Oset{s}  &\defeq \setdef{t \in T_+}{L_S(s) < R_T(t) \land L_T(t) < R_S(s)}
\\
\Rset{s}  &\defeq \setdef{t \in T_+}{L_T(t) \geq R_S(s)}
\end{align*}
These sets satisfy the property that
$l < o < r$ for any $t_l \in \Lset{s}$, $t_o \in \Oset{s}$, and $t_r \in \Rset{s}$.
Observe that $b(s,t) \neq 0$ if and only if $t \in \Oset{s}$, and in that case,
$b(s,t) = \min\enset{R_S(s), R_T(t)} - \max\enset{L_S(s),L_T(t)}$.

Since $R_S(s) - L_S(s) = b_S(s) > 0$, one must have $\Oset{s} \neq \emptyset$.
So, let $t_p$ and $t_u$ be, respectively, the first and last elements of $\Oset{s}$
(note that we are not excluding the possibility that $\Oset{s}$ has a single element; in that case we merely have $p = u$).
Observe that 
\[\max\enset{L_S(s),L_T(t_p)} = L_S(s) \Mcomma\]
for otherwise
$R_T(t_{p-1}) = L_T(t_p) > L_S(s)$ meaning that $t_{p-1} \not\in \Lset{s}$,
which would contradict the minimality of $t_p$ in $\Oset{s}$.
On the other hand, for any other element of $\Oset{s}$, \ie for any $t_j$ with $p < j \leq u$, 
\[\max\enset{L_S(s),L_T(t_j)} = L_T(t_j)\Mcomma\]
for otherwise $L_S(s) > L_T(t_j) \geq R_T(t_p)$ and we would have $t_p \not\in \Oset{s}$, a contradiction.

Dually, we have 
\[\min\enset{R_S(s),R_T(t_u)} = R_S(s) \Mcomma\]
and 
\[\min\enset{R_S(s),R_T(t_j)} = R_T(t_j)\] for any $t_j$ with $p \leq j < u$.

Therefore,
\begin{calculation}
b|_S(s)
\ejust=
\sum_{t \in \Oset{s}}b(s,t)
\ejust=
\sum_{p \leq j \leq u} \left(\min\enset{R_S(s), R_T(t_j)} - \max\enset{L_S(s),L_T(t_j)}\right)
\ejust=
\min\enset{R_S(s), R_T(t_u)}
+
\sum_{p \leq j < u} \min\enset{R_S(s), R_T(t_j)}
\ejust~
-
\sum_{p < j \leq u} \max\enset{L_S(s),L_T(t_j)}
-
\max\enset{L_S(s),L_T(t_p)}
\ejust=
R_S(s)
+
\sum_{p \leq j < u}  R_T(t_j)
-
\sum_{p < j \leq u} L_T(t_j)
-
L_S(s)
\ejust=
R_S(s)
+
\sum_{p \leq j < u} L_T(t_{j+1})
-
\sum_{p < j \leq u} L_T(t_j)
-
L_S(s)
\ejust=
R_S(s)
+
\sum_{p < j \leq u} L_T(t_j)
-
\sum_{p < j \leq u} L_T(t_j)
-
L_S(s)
\ejust=
R_S(s) - L_S(S)
\ejust=
b_S(s)
\end{calculation}%
\end{proof}

\begin{lemma}\label{lemma:subdistchoice}
Let $b_S$ and $b_T$ be subprobability distribution on sets $S$ and $T$, respectively.
Then there exists a subprobability distribution $b$ on $S \times T$ with weight $\weight{b} = \min\{\weight{b_S},\weight{b_T}\}$ such that
$b|_S \leq b_S$ and $b|_T \leq b_T$.
\end{lemma}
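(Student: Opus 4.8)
The plan is to reduce everything to Lemma~\ref{lemma:subdistchoice-sameweight}, which already does the combinatorial work in the equal-weight case. By the symmetric roles of $S$ and $T$, I may assume without loss of generality that $\weight{b_S} \leq \weight{b_T}$, so that $\min\enset{\weight{b_S},\weight{b_T}} = \weight{b_S}$; write $\weightz \defeq \weight{b_S}$ for this target weight.

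First I would dispatch the degenerate case $\weightz = 0$: here the zero subprobability distribution $b$ on $S \times T$ has weight $0 = \min\enset{\weight{b_S},\weight{b_T}}$ and trivially satisfies $b|_S \leq b_S$ and $b|_T \leq b_T$. So from now on assume $\weightz > 0$, which in particular forces $\weight{b_T} > 0$.

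Next I would shrink $b_T$ to a subdistribution of exactly the same weight as $b_S$ by uniform rescaling: set $b_T' \defeq (\weightz / \weight{b_T})\, b_T$. Since $0 < \weightz \leq \weight{b_T}$, the scaling factor lies in $(0,1]$, so $b_T'$ is again a subprobability distribution (finite support, weight $\leq 1$) with $b_T' \leq b_T$ pointwise and $\weight{b_T'} = \weightz = \weight{b_S}$. Now $b_S$ and $b_T'$ have a common weight $\weightz$, so Lemma~\ref{lemma:subdistchoice-sameweight} yields a subprobability distribution $b$ on $S \times T$ with $\weight{b} = \weightz$ whose marginals are \emph{exactly} $b|_S = b_S$ and $b|_T = b_T'$. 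Then $b|_S = b_S \leq b_S$ and $b|_T = b_T' \leq b_T$, while $\weight{b} = \weightz = \min\enset{\weight{b_S},\weight{b_T}}$, which is what is required.

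I do not expect any real obstacle here: all the delicate content (the ``staircase''/interval-matching construction that aligns the two supports) is confined to Lemma~\ref{lemma:subdistchoice-sameweight}, and the present lemma is merely the observation that one can first rescale the heavier subdistribution down to equal weight before invoking it. The only points that need a sentence of care are the zero-weight edge cases, handled above, and noting that scaling by a factor in $(0,1]$ keeps the object a subprobability distribution.
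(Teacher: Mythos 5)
Your proof is correct and follows essentially the same route as the paper's: handle the zero-weight case, assume WLOG $\weight{b_S}\leq\weight{b_T}$, rescale $b_T$ by the factor $\weight{b_S}/\weight{b_T}$ to equalise the weights, and then invoke Lemma~\ref{lemma:subdistchoice-sameweight}. No gaps; your treatment of the edge cases is in fact slightly more explicit than the paper's.
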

\begin{proof}
If one of the distributions has zero weight, the result is obvious, so let $\weight{b_S}, \weight{b_T} > 0$.
Without loss of generality, assume $\weight{b_S} \leq \weight{b_T}$ and renormalise
$b_T$ by the shrinking factor $\frac{\weight{b_S}}{\weight{b_T}}$, yielding a subprobability distribution $b'_T$ that is a subdistribution of $b_T$
and has the same weight as $b_S$. The result then follows from Lemma~\ref{lemma:subdistchoice-sameweight}.
\end{proof}

\section{D. Contextual fraction and \ltMBQC\ (Proof of Theorem~\ref{thm:CFMBQC})}

We recall (and rephrase) from \cite{RaussendorfSC} the definitions of measurement-based quantum computation with $\ZZ_2$-linear classical processing (\ltMBQC).
A computation of this kind is performed
by a parity computer, acting as the classical control to choose measurement settings and processing outcomes,
with access to a resource in the form of an empirical model on a multipartite scenario.

Note that in an $(n,2,2)$ Bell scenario---\ie a scenario where $n$ parties can each choose between performing one of two different measurements, which may each yield one of two possible outcomes---the measurement contexts, which comprise a choice of measurement setting for each party, can be represented by a  vector in $2^n$.
Similarly, a joint outcome may also be represented as a vector in $2^n$.
Therefore, such an empirical model $e$ determines a function $2^n \longrightarrow \Distr (2^n)$ that associates to each (measurement) vector $\qq \in 2^n$ the
probability distribution $e_\qq$ on outcome vectors in $2^n$.

An \ltMBQC\ with $m$ bits of input and $l$ bits of output using an $n$-partite resource consists of:
\begin{itemize}
\item a pre-processing $n \times m$ $\ZZ_2$-matrix $\MQ$;
\item a post-processing $l \times n$ $\ZZ_2$-matrix $\MZ$;
\item an $n \times n$ strictly lower triangular $\ZZ_2$-matrix $\MT$ representing the flow;
\item an empirical model on the $(n,2,2)$ Bell scenario.
\end{itemize}
We shall often denote an \ltMBQC\ as a pair $\tuple{K,e}$
where $K$ is the description of the classical processing (the triple of matrices $\tuple{\MQ,\MT,\MZ}$) and $e$ the empirical model used as a resource.

Each execution starts with a vector $\ii \in 2^m$ of inputs and calculates a vector $\oo \in 2^l$ of outputs,
using two intermediate vectors $\qq, \sv \in 2^n$, proceeding according to:
\begin{equation}\label{eq:l2MBQC-exectution}
\qq \defeq \MQ \ii + \MT \sv
\;\;\Mcomma\;\;\;
\oo \defeq \MZ \sv \Mcomma
\end{equation}
and with $\sv$ obtained from $\qq$ by sampling the distribution $e_\qq$ -- \ie by performing the measurements, using the resource.
Note that in actual fact the distribution is not sampled \textit{at once}, since parties may be asked to perform their measurements at different times.
Indeed, the decision of which measurement to perform on a party may depend on the outcome of the measurement already performed by other parties, as is clear from the way the vector $\qq$ is obtained. The reason for $\MT$ being strictly lower triangular is exactly to ensure that there is an order in which the computation may be performed
(the point is that the $j$-th component of the vector $\qq$, $\qq[j]$, \ie the measurement to be performed by the $j$-th party, can only depend on the values of $\sv[i]$ with $i < j$, \ie on the outcomes of the measurements performed at parties with index strictly smaller than $j$).
But the fact that $e$ satisfies the no-signalling condition guarantees that a subset of the measurements of a context may be performed unambiguously without the knowledge of what the full context will be.

Note that this execution is probabilistic. As such, the \ltMBQC\ $\tuple{K,e}$ determines a map $\fdec{\ds{K,e}}{2^m}{\Distr(2^l)}$
associating to each input bit string a distribution on output bit strings: 
given $\ii \in 2^m$ and $\oo \in 2^l$, $\ds{K,e}(\ii)(\oo)$ is the probability of obtaining output bit string $\oo$ when performing the computation $\tuple{K,e}$ with input bit string $\ii$.

The first ingredient necessary to state the result is a measure of success of the computation in implementing a particular objective function.
Consider a function $\fdec{f}{2^m}{2^l}$, which one aims to implement.
The \ltMBQC\ $\tuple{K,e}$ evaluates $f$ with worst-case success probability
\[p_S^{\tuple{K,e},f} \defeq \min_{\ii \in 2^m} \ds{K,e}(\ii)(f \ii) \Mcomma\]
and with average success probability
\begin{equation}\label{eq:def-ps}
\bar{p}_S^{\tuple{K,e},f} \defeq \frac{1}{2^m}\sum_{\ii \in 2^m} \ds{K,e} (\ii)(f \ii) \Mdot
\end{equation}
We shall omit the upper indices if they are clear from the context. Clearly, we always have $p_S \leq \bar{p}_S$.

We now introduce a measure of the \textit{hardness} of the problem one aims to implement.
Since linear functions are the \textit{free} computations in this model, this expresses how much the objective function deviates from being linear.
The \emph{average distance} between two Boolean functions $\fdec{f,g}{2^m}{2^l}$ is given by 
\begin{equation}\label{eq:def-dist}
\tilde{d}(f,g) \defeq \frac{1}{2^m}|\setdef{i\in 2^m}{f(\ii)\neq g(\ii)}| \Mdot
\end{equation}
The average distance of $f$ to the closest linear function is denoted by 
\begin{equation}\label{eq:def-tildenu}
\tilde{\nu}(f) \defeq \min\setdef{\tilde{d}(f,h)}{\text{$\fdec{h}{2^m}{2^l}$ $\ZZ_2$-linear}} \Mdot
\end{equation}

We can now present a proof of the result that relates the hardness of the problem, the probability of success, and the contextual fraction of the resource.
This sharpens ideas implicit in the proof of \cite[Lemma 1]{RaussendorfSC}.

\thmCFMBQCSUPP
\begin{proof}
Write $e = \NCF(e)e^{NC} + \CF(e)e^{SC}$, and consider the \ltMBQC s $\tuple{K,e^{NC}}$ and $\tuple{K,e^{SC}}$ 
that correspond to using the same classical processing as our \ltMBQC\ of interest but using the empirical model $e^{NC}$ (resp. $e^{SC}$) instead of $e$.
Then, we have that, for any input $\ii \in 2^m$,
\begin{equation}\label{eq:decompose-Ke}
\ds{K,e}(\ii) = \NCF(e)\ds{K,e^{NC}}(\ii) + \CF(e)\ds{K,e^{SC}}(\ii) \Mdot
\end{equation}
and so
\begin{calculation}
\bar{p}_S^{\tuple{K,e},f}
\just={definition of $\bar{p}_S$, eq. \eqref{eq:def-ps}} 
\frac{1}{2^m}\sum_{\ii \in 2^m} \ds{K,e} (\ii)(f \ii)
\just={decomposition of the computation $\tuple{K,e}$, eq. \eqref{eq:decompose-Ke}}
\frac{1}{2^m}\sum_{\ii \in 2^m} ( \NCF(e)\ds{K,e^{NC}}(\ii)(f \ii) 
\\ & &
\hspace{3.6em}+\; \CF(e)\ds{K,e^{SC}}(\ii)(f \ii) )
\just={distributivity and definition of $\bar{p}_S$, eq. \eqref{eq:def-ps} (twice)}
\NCF(e)\, \bar{p}_S^{\tuple{K,e^{NC}},f} \;+\; \CF(e)\, \bar{p}_S^{\tuple{K,e^{SC}},f}
\end{calculation}%

We can bound the probability of success by ignoring what happens when the strongly contextual part of the resource is used (\ie assuming that it always suceeds in that case):
\[
\bar{p}_S^{\tuple{K,e},f} \leq \NCF(e)\, \bar{p}_S^{\tuple{K,e^{NC}},f} \;+\; \CF(e) \Mdot
\]
Hence,
\[\bar{p}_F^{\tuple{K,e},f} \quad \geq \quad 
\NCF(e) \, \bar{p}_F^{\tuple{K,e^{NC}},f} \Mdot\]
That is, the computation will fail to compute $f$ at least when the non-contextual part of the resource is used and fails to compute $f$,
thus the overall average probability of failure will be at least the average probability of failure of the non-contextual part of the resource multiplied
by the weight of this part, \ie the non-contextual fraction.

It remains to show that the average probability of failure of a non-contextual empirical model is at least $\nu(f)$.
One can similarly break down $e^{NC}$ into a convex combination of deterministic non-contextual empirical models,
\[e^{NC} = \sum_{g \in O^X} d(g) \delta_g \]
with $d$ a distribution on $O^X$,
where we write $\delta_g$ for the deterministic model determined by the global assignment $g \in O^X$, for which $(\delta_g)_C$ is the delta distribution at $g|_C$.
By a similar derivation as above, we have
\[
\ds{K,e^{NC}}(\ii)
=
\sum_{g \in O^X}d(g)\, \ds{K,\delta_g}(\ii) \Mcomma
\]
and then
\[
\bar{p}_F^{\tuple{K,e},f}
=
\sum_{g \in O^X}d(g) \, \bar{p}_F^{\tuple{K,\delta_g},f} \Mdot
\]
So, it is enough to show that the average probability of failure for any deterministic non-contextual model $\delta_g$ is at least $\nu(f)$.

Note that computation by $\tuple{K,\delta_g}$ is always deterministic, hence there is a function $\fdec{h^{K,g}}{2^m}{2^l}$ such that
\begin{equation}\label{eq:ds-deltag}
\ds{K,\delta_g}(\ii)(\oo) = \delta(h^{K,g}(\ii),\oo) =
\begin{cases}
1 & \text{ if $h^{K,g}(\ii)=\oo$}
\\
0 & \text{ otherwise}
\end{cases}
\end{equation}

Moreover, this function $h^{K,g}$ is linear, as shown by Raussendorf  \cite[Theorem 2]{RaussendorfSC}.
The proof of this fact essentially comes down to the fact that this function is obtained from composition of several linear functions.
The crux of the matter is the transformation from $\qq$ to $\sv$ (a map $2^n \longrightarrow 2^n$), when determined by a global assignment $g$ as is the case here,
can be seen to be built from
$n$ maps of type $2 \longrightarrow 2$, each of which is necessarily $\ZZ_2$-linear, as are all functions of this type.
Consequently, one can write $\sv = \diag(\dd) \qq + \cc$ for some $n$-component vectors $\dd$ and $\cc$.
Combining this with eq. \eqref{eq:l2MBQC-exectution}, one obtains
\[\oo = \MZ\sv = \MZ (\MIn - \diag(\dd) \MT)^{-1}(\diag(\dd) \MQ \ii + \cc) \Mcomma\]
where invertibility of $\MIn - \diag(\dd) \MT$ follows from $\MT$ being strictly lower triangular. This shows that $\oo$ is a given as a linear function of $\ii$ when the map from  $\qq$ to $\sv$ is determined by a global assignment; that is, $h^{K,g}$ is linear.

As a consequence, we have that
\begin{calculation}
\bar{p}_F^{\tuple{K,\delta_g},f}
\ejust=
1-\bar{p}_S^{\tuple{K,\delta_g},f}
\just={definition of $\bar{p}_S$, eq. \eqref{eq:def-ps}}
1-\frac{1}{2^m}\sum_{\ii \in 2^m} \ds{K,\delta_g} (\ii)(f \ii)
\just={expanding $\ds{K,\delta_g}$ by eq. \eqref{eq:ds-deltag}}
1-\frac{1}{2^m}\sum_{\ii \in 2^m} \delta(h^{K,g}(\ii),f \ii)
\ejust=
1-\frac{1}{2^m}\left|\setdef{\ii \in 2^m}{h^{K,g}(\ii) = f \ii}\right|
\ejust=
\frac{1}{2^m}\left|\setdef{\ii \in 2^m}{h^{K,g}(\ii) \neq f \ii}\right|
\just={definition of average distance, eq. \eqref{eq:def-dist}}
\tilde{d}(h^{K,g},f)
\just\geq{definition of $\tilde{\nu}$, eq. \eqref{eq:def-tildenu}, since $h^{K,g}$ is linear}
\tilde{\nu}(f) \Mcomma
\end{calculation}%
which concludes the proof.
\end{proof}

\section{E. Contextual fraction and games}

We expand here on the last paragraph of the main text.
We discuss an interpretation of the result of Theorem~\ref{thm:CFBellineqs}
in light of the logical description of Bell inequalities
from \cite{AbramskyHardy:LogicalBellIneqs}.
This yields a result with a similar flavour to that of Theorem~\ref{thm:CFMBQC}
relating the hardness of a task, its probability of success, and the contextual fraction of the resource used.

A constraint system is a specified by a tuple $\tuple{V,D,\Gamma}$,
where $V$ is a finite set of variables,
$D$ a finite set called the domain,
and $\Gamma$ a finite set of
formulae on the variables in $V$.
We write $V(\phi)$ for the variables that appear in the formula $\phi \in \Gamma$.

We consider the following task.
Given a formula $\phi \in \Gamma$ as input, one must reply with an assignment
$\fdec{s}{V(\phi)}{D}$ of domain values to each variable that appears in the formula.
Thus, a probabilistic strategy is a family $\enset{p_\phi}_{\phi \in \Gamma}$
where $p_\phi$ is a probability distribution over $D^{V(\phi)}$.
A strategy is considered valid if these probabilities satisfy the following compatibility condition:
\[
\Forall{\phi_1, \phi_2 \in \Gamma}
p_{\phi_1}|_{V(\phi_1) \cap V(\phi_2)} = p_{\phi_2}|_{V(\phi_1) \cap V(\phi_2)}
\]
The idea is that, averaging over several runs, one should give consistent answers for the same variable apearing in different formulae.
In specific cases where this game can be interpreted as a multi-player nonlocal game (see below),
this condition corresponds to the imposed requirement of no communication between players.

The goal is to answer with an assignment $\fdec{s}{V(\phi)}{D}$ that satisfies the input formula $\phi$ (written $s \models \phi$) as often as possible, while following a valid strategy. The \emph{average probability of success} of a strategy $\enset{p_\phi}_{\phi \in \Gamma}$ is given by
\[p_S \defeq \frac{1}{|\Gamma|}\sum_{\phi \in \Gamma}\sum_{s\in D^{V(\phi)}, s \models \phi} p_\phi(s)\]
where we are assuming that all input formulae are equally probable.

Since the aim is to jointly satisfy a set of formulae, the notion of $k$-consistency is a measure of the hardness of the task. A set of formulae is said to be $k$-consistent if at most $k$ can be jointly satisfied by an assignment. If our set $\Gamma$ of $n$ formulae is $k$-consistent, the fraction $\frac{(n-k)}{n}$ is a normalised measure of the hardness of the task.

Note that all this is simply an alternative way of formulating measurement scenarios and empirical models.
We can consider a scenario $\XMO$ given by $X = V$, $O = D$, and with the contexts in $\M$ being the maximal sets of the form $V(\phi)$ with $\phi \in \Gamma$. Then, a valid strategy is simply a (no-signalling) empirical model for this scenario. As such, we can speak of its contextual fraction.

The particular case of Bell scenarios corresponds to a multi-player game where each player is responsible to answer for certain variables and where at most one variable from each player may appear in each formula $\phi \in \Gamma$. 
This is a nonlocal game as considered e.g. in \cite{CleveMittal2014, CleveLiuSlofstra17}, and there the validity condition on the strategy can be motivated as imposing no communication between players.
In fact, there is a way of converting a general constraint satisfaction task into
a two-player nonlocal game in such a way that the imposition of no-communication captures the strategy validity requirement. This transformation yields
an interesting correspondence even at the level of quantum realisability \cite{CleveMittal2014, CleveLiuSlofstra17, MancinskaRoberson2016, AbramskyEtAl2017:QuantumMonad}, where it can be seen as establishing a tight connection between strong nonlocality and state-independent contextuality \cite{AbramskyEtAl2017:QuantumMonad}.
However, we shall not expand on this point here.

Combining our Theorem~\ref{thm:CFBellineqs} with the logical description of Bell inequalities from \cite{AbramskyHardy:LogicalBellIneqs}, we obtain the following result.
\begin{theorem}
Let $\tuple{V,D,\Gamma}$ be a constraint system where $\Gamma$ is $k$-consistent,
and consider a valid strategy $\{p_\phi\}_{\phi \in \Gamma}$ with average probability of success $p_S$, and corresponding average probability of failure $p_F \defeq 1 - p_S$. Then,
\[p_F \geq \NCF(p)\frac{n-k}{k} \Mdot\]
\end{theorem}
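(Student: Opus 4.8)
The plan is to read the result off Theorem~\ref{thm:CFBellineqs}(i) by producing a single \emph{logical} Bell inequality, in the sense of \cite{AbramskyHardy:LogicalBellIneqs}, whose normalised violation by the strategy encodes its failure probability. As already set up above, a valid strategy $\enset{p_\phi}_{\phi \in \Gamma}$ is exactly a no-signalling empirical model $p$ on the scenario $\XMO$ with $X = V$, $O = D$, and contexts the maximal sets $V(\phi)$; write $n \defeq |\Gamma|$ and work with $p$ and its representing vector $\veA{p}$ throughout. The quantity to control is the success count $\sum_{\phi \in \Gamma}\sum_{s \models \phi} p_\phi(s)$, which is $n\,p_S$ by the definition of $p_S$.

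First I would build the coefficient vector $\av$, indexed by local assignments $\tuple{\phi,s}$, by setting $\av[\tuple{\phi,s}] \defeq 1$ when $s \models \phi$ and $0$ otherwise, so that $\av \cdot \veA{p}$ is precisely this success count. The decisive step is the bound. Because $\Gamma$ is $k$-consistent, every global assignment $\fdec{g}{V}{D}$ satisfies at most $k$ of the formulae, so $\av \cdot \veA{g} \leq k$ for the deterministic non-contextual model determined by $g$; taking convex combinations, $\av \cdot \veA{e^{NC}} \leq k$ for every non-contextual model $e^{NC}$. Hence $\tuple{\av,k}$ is a genuine Bell inequality for this scenario, with non-contextual bound $R = k$ and maximal (algebraic) violation read off $\|\av\|$ by selecting, at each context, a best satisfying assignment. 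This is the one place where $k$-consistency enters the argument.

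Next I would invoke Theorem~\ref{thm:CFBellineqs}(i): the normalised violation of $\tuple{\av,k}$ by $p$ is at most $\CF(p) = 1 - \NCF(p)$. Substituting the success count $\av \cdot \veA{p} = n\,p_S$, the bound $R = k$, and the maximal violation $\|\av\|$ into the normalised-violation quotient $\tfrac{\max\enset{0,\av\cdot\veA{p}-R}}{\|\av\|-R}$, and rearranging with $\CF(p) = 1 - \NCF(p)$ and $p_F = 1 - p_S$, then yields the claimed inequality $p_F \geq \NCF(p)\,\tfrac{n-k}{k}$. (When the success count does not exceed $k$ the violation is $0$ and the bound holds trivially, so only the case $n\,p_S > k$ needs the computation.)

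The hard part will be the middle step: pinning down the logical Bell inequality exactly. One must verify that $k$-consistency delivers the non-contextual bound $R = k$ verbatim, and then compute the maximal violation $\|\av\|$ so that the normalisation in Theorem~\ref{thm:CFBellineqs}(i) produces precisely the hardness factor appearing in the statement. A secondary subtlety is the bookkeeping when several formulae share a variable set $V(\phi)$, and hence index the same context, which must be absorbed into the coefficient vector without disturbing the bound; the cleanest route is to appeal directly to the logical-Bell-inequality machinery of \cite{AbramskyHardy:LogicalBellIneqs}, which packages both the bound $R$ and the normalisation together and thereby fixes the final constant.
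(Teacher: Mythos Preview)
Your approach is exactly the paper's: produce the logical Bell inequality of \cite{AbramskyHardy:LogicalBellIneqs} with non-contextual bound $R=k$ and algebraic bound $\|\av\|=n$, then invoke Theorem~\ref{thm:CFBellineqs}\ref{item:atmost} and rearrange. The paper does not even construct $\av$ explicitly; it simply quotes \cite[Proposition~V.1]{AbramskyHardy:LogicalBellIneqs} for the inequality $\sum_{\phi}\sum_{s\models\phi}p_\phi(s)\le k$ with algebraic bound $n=|\Gamma|$, which is precisely what your indicator-coefficient vector encodes.

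The one genuine gap is the final rearrangement, and it is not one you can close: the constant $\tfrac{n-k}{k}$ in the theorem statement is a typo (the main text of the paper has the correct $\tfrac{n-k}{n}$). With $\|\av\|=n$ (each $\phi$ satisfiable) the normalised violation is $\tfrac{n\,p_S-k}{n-k}$, and $\tfrac{n\,p_S-k}{n-k}\le 1-\NCF(p)$ rearranges via $n(1-p_F)\le n-(n-k)\NCF(p)$ to $p_F\ge \NCF(p)\,\tfrac{n-k}{n}$, not $\tfrac{n-k}{k}$. So the ``hard part'' you flag---computing $\|\av\|$ and checking that the normalisation delivers the stated factor---would in fact reveal the discrepancy rather than confirm the bound as written. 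Your caution about several formulae sharing the same $V(\phi)$ is legitimate but orthogonal: the paper sidesteps it by summing over $\Gamma$ rather than over maximal contexts, effectively allowing repeated contexts in the coefficient vector, which does not affect the argument.
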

\begin{proof}
From \cite[Proposition~V.1]{AbramskyHardy:LogicalBellIneqs},
the $k$-consistent set of formulae $\Gamma$ gives the Bell inequality
\begin{equation*}\label{eq:Bellineq-Kconsistent}
\sum_{\phi \in \Gamma} \; \sum_{\substack{s \in D^{V(\phi)} \\ s\models \phi}} p_\phi(s) \leq k
\end{equation*}
with algebraic bound equal to $n = |\Gamma|$.
Note that the left-hand side is equal to $n \, p_S$.

By Theorem~\ref{thm:CFBellineqs},
we must have that the normalised violation of this Bell inequality by (the empirical model corresponding to) the strategy $\{p_\phi\}_{\phi\in\Gamma}$ is at most $\CF(p)$:
\[
\frac{\max\{0, n \, p_S - k\}}{n - k}
\leq
\CF(p)
\Mdot
\]
By substituting $(1-p_F)$ for $p_S$ and $(1-\NCF(p))$ for $\CF(p)$ and simplifying,
this easily seen to be equivalent to the desired relation.
\end{proof}

\fi
\end{document}